\documentclass[letterpaper, 10 pt, conference]{ieeeconf} 

\IEEEoverridecommandlockouts                              
\overrideIEEEmargins

\title{\LARGE \bf
Strong Coordination with Polar Codes
}

\author{Matthieu R. Bloch$^{1}$, Laura Luzzi$^{2}$, and J{\"o}rg Kliewer$^{3}$
\thanks{$^{1}$M. Bloch is with the School of Electrical and Computer Engineering, Georgia Institute of Technology, Atlanta, GA 30332, USA and with the GT-CNRS UMI 2958, Metz, France.
        {\tt\small matthieu.bloch@ece.gatech.edu}}%
\thanks{$^{2}$L. Luzzi was with the Department of Electrical and Electronic Engineering,
Imperial College London, London SW7 2AZ, United Kingdom.
She is now with Laboratoire ETIS (ENSEA - Universit{\'e} de Cergy-Pontoise
- CNRS), 6 Avenue du Ponceau, 95014 Cergy-Pontoise, France. {\tt\small laura.luzzi@ensea.fr}}%
\thanks{$^{3}$J. Kliewer is with the Klipsch School of Electrical and Computer Engineering, New Mexico State University, Las Cruces, NM, 88003, USA. {\tt\small  jkliewer@nmsu.edu}}%
}

\usepackage{color,graphicx,amsmath,amssymb,stmaryrd,dsfont}
\usepackage{rotating,array,url}

\DeclareMathAlphabet{\eurm}{U}{eur}{m}{n}
\DeclareMathAlphabet{\mathbsf}{OT1}{cmss}{bx}{n}
\DeclareMathAlphabet{\mathssf}{OT1}{cmss}{m}{sl}
\DeclareMathAlphabet{\mathcsf}{OT1}{cmss}{sbc}{n}



\newcommand{\randomvalue}[1]{\eurm{\uppercase{#1}}}

\DeclareSymbolFont{bsfletters}{OT1}{cmss}{bx}{n}  
\DeclareSymbolFont{ssfletters}{OT1}{cmss}{m}{n}
\DeclareMathSymbol{\bsfGamma}{0}{bsfletters}{'000}
\DeclareMathSymbol{\ssfGamma}{0}{ssfletters}{'000}
\DeclareMathSymbol{\bsfDelta}{0}{bsfletters}{'001}
\DeclareMathSymbol{\ssfDelta}{0}{ssfletters}{'001}
\DeclareMathSymbol{\bsfTheta}{0}{bsfletters}{'002}
\DeclareMathSymbol{\ssfTheta}{0}{ssfletters}{'002}
\DeclareMathSymbol{\bsfLambda}{0}{bsfletters}{'003}
\DeclareMathSymbol{\ssfLambda}{0}{ssfletters}{'003}
\DeclareMathSymbol{\bsfXi}{0}{bsfletters}{'004}
\DeclareMathSymbol{\ssfXi}{0}{ssfletters}{'004}
\DeclareMathSymbol{\bsfPi}{0}{bsfletters}{'005}
\DeclareMathSymbol{\ssfPi}{0}{ssfletters}{'005}
\DeclareMathSymbol{\bsfSigma}{0}{bsfletters}{'006}
\DeclareMathSymbol{\ssfSigma}{0}{ssfletters}{'006}
\DeclareMathSymbol{\bsfUpsilon}{0}{bsfletters}{'007}
\DeclareMathSymbol{\ssfUpsilon}{0}{ssfletters}{'007}
\DeclareMathSymbol{\bsfPhi}{0}{bsfletters}{'010}
\DeclareMathSymbol{\ssfPhi}{0}{ssfletters}{'010}
\DeclareMathSymbol{\bsfPsi}{0}{bsfletters}{'011}
\DeclareMathSymbol{\ssfPsi}{0}{ssfletters}{'011}
\DeclareMathSymbol{\bsfOmega}{0}{bsfletters}{'012}
\DeclareMathSymbol{\ssfOmega}{0}{ssfletters}{'012}


\newcommand{\rvS}{{\randomvalue{S}}}	
\newcommand{\rvU}{{\randomvalue{U}}}	
\newcommand{\rvV}{{\randomvalue{V}}}	
\newcommand{\rvX}{{\randomvalue{X}}}  	
\newcommand{\rvY}{{\randomvalue{Y}}}	






















\newcommand{\calB}{{\mathcal{B}}}
\newcommand{\calC}{{\mathcal{C}}}

\newcommand{\calG}{{\mathcal{G}}}

\newcommand{\calU}{{\mathcal{U}}}

\newcommand{\calX}{{\mathcal{X}}}
\newcommand{\calY}{{\mathcal{Y}}}

\newcommand{\E}[2][]{{\mathbb{E}_{#1}}{\left(#2\right)}}       
       


\newcommand{\avgD}[2]{{{\mathbb{D}}\!\left({#1\Vert#2}\right)}}
\newcommand{\V}[1]{{{\mathbb{V}}\!\left(#1\right)}}
\newcommand{\avgI}[1]{{{\mathbb{I}}\!\left(#1\right)}}



\newcommand{\Hb}[1]{{\mathbb{H}_b}\left(#1\right)}




















\newcommand{\card}[1]{\ensuremath{\left|{#1}\right|}}           
\newcommand{\abs}[1]{\ensuremath{\left|#1\right|}}              
\newcommand{\eqdef}{\ensuremath{\triangleq}}                    
\newcommand{\intseq}[2]{\ensuremath{\llbracket{#1},{#2}\rrbracket}}  


\newtheorem{lemma}{Lemma}

\newtheorem{theorem}{Theorem}
\newtheorem{proposition}{Proposition}

\newtheorem{remark}{Remark}

\renewcommand{\leq}{\leqslant}
\renewcommand{\geq}{\geqslant}
\newcommand{\db}[1]{{\Bar{#1}}}
\graphicspath{{graphics/}}

\begin{document}

\maketitle

\begin{abstract}
In this paper, we design explicit codes for strong coordination in two-node networks. Specifically, we consider a two-node network in which the action imposed by nature is binary and uniform, and the action to coordinate is obtained via a symmetric discrete memoryless channel. By observing that polar codes are useful for channel resolvability over binary symmetric channels, we prove that nested polar codes achieve a subset of the strong coordination capacity region, and therefore provide a constructive and low complexity solution for strong coordination.
\end{abstract}

\section{Introduction}
\label{sec:introduction}

The characterization of the information-theoretic limits of coordination in networks has recently been investigated in~\cite{Cuff2010}. The coordinated actions of nodes in the network are modeled by joint probability distributions, and the level of coordination is measured in terms of how well these joint distributions approximate a target joint distribution. Two types of coordination have been introduced: {empirical} coordination, which only requires the empirical distribution of coordinated actions to approach a target distribution, and strong coordination, which requires the total variational distance of coordinated actions to approach a target distribution. The concept of coordination sheds light into the fundamental limits of several problems, such as distributed control or task assignment in a network.

The design of practical and efficient coordination schemes approaching the fundamental limits predicted by information theory has attracted little attention to date. One of the hurdles faced for code design is that the metric to optimize is not a probability of error but a variational distance between distributions. Nevertheless, polar codes~\cite{Arikan2009} have recently been successfully adapted~\cite{Blasco-Serrano2012} for empirical coordination, with an analysis based on results from lossy source coding with polar codes~\cite{Korada2010}. In this paper, we construct polar codes that are able to achieve strong coordination in some cases. Unlike the construction in~\cite{Blasco-Serrano2012}, which solely relies on source coding with polar codes, our construction also exploits polar codes for channel resolvability~\cite{Han1993}. Channel resolvability characterizes the bit rate required to simulate a process at the output of a channel and plays a key role in the analysis of the common information between random variables~\cite{Wyner1975a}, secure communication over wiretap channels~\cite{Hayashi2006,Bloch2011a}, and coordination~\cite[Lemma 19]{Cuff2010}. By remarking that polar codes can be used for channel resolvability, we are able to provide a constructive alternative to the information-theoretic proof in~\cite{Cuff2010}.

The remainder of the paper is organized as follows. Section~\ref{sec:preliminaries} sets the notation and recalls known results for polar codes. Section~\ref{sec:polar-chann-resolv} shows that polar codes provide channel resolvability for symmetric channels by leveraging results in~\cite{Mahdavifar2011}. Section~\ref{sec:polar-strong-coord} proves that polar codes achieve strong coordination for simple two-node networks with symmetric actions. Finally, Section~\ref{sec:discussion} concludes the paper with a discussion of potential improvements and extensions.

\section{Notation and Preliminaries}
\label{sec:preliminaries}

First, a word about notation. Given a length $n$ vector $\mathbf{x}=(x_1,\cdots,x_n)$ and $i\in\intseq{1}{n}$, we use the notation $\mathbf{x}_1^{i}$ as a shorthand for the row vector $(x_1,\cdots,x_i)$. Similarly, for any set $F\in\intseq{1}{n}$, we denote by $\mathbf{x}_F$ the vector of length $\abs{F}$ containing the indices $x_i$ for $i\in F$. The distributions of different random variables defined on the same alphabet $\calX$ are denoted by different symbols, e.g. $p_{\rvX}$, $q_\rvX$. For brevity, the subscripts in the distributions may be dropped if the alphabet is clear from the context or from the argument. We also use $\mathbb{D}$ to denote the Kullback-Leibler divergence between two distributions.

Next, we briefly review the concepts and notation related to polar codes that will be used throughout the paper. The key element in the polar coding construction is the decomposition of $n\eqdef 2^m$ independent copies of a given binary-input discrete-memoryless channel $(\calX,W_{\rvY|\rvX},\calY)$ with capacity $C(W_{\rvY|\rvX})$ into $n$ \emph{bit-channels} which are essentially either error-free or pure noise channels. Specifically, consider the transformation $G_n\eqdef G_2^{\otimes n}$ where
\begin{align*}
  G_2\eqdef \begin{pmatrix} 1 & 0 \\ 1 & 1 \end{pmatrix}
\end{align*}
and $\otimes$ is the Kronecker product. A vector $\mathbf{u}\in\{0,1\}^n$ is transformed into $\mathbf{x} = \mathbf{u}G_n$.
The $i$-th bit channel $(\{0,1\},W_n^{(i)},\calY^n\times\{0,1\}^{i-1})$ is a composite channel that combines the transformation $G_n$ and the channel, and is defined by its transition probabilities
\begin{align*}
  W_n^{(i)}(\mathbf{y},\mathbf{u}_{1}^{i-1}|u_i)\eqdef \frac{1}{2^{n-1}}\sum_{\mathbf{u}_{i+1}^n}W_{\rvY^n|\rvX^n}(\mathbf{y}|\mathbf{u}G_n).
\end{align*}
For $n$ large enough, the bit channels \emph{polarize}, i.e. they become either completely noisy or noise-free. The exact measure of the noise level will be specified in subsequent sections.

\section{Channel Resolvability with Polar Codes}
\label{sec:polar-chann-resolv}

\subsection{Channel resolvability}
\label{sec:chann-resolv}

In its simplest formulation, the problem of \emph{channel resolvability}~\cite{Wyner1975a,Han1993} can be stated as follows. Consider a discrete memoryless channel $(\calX,W_{\rvY|\rvX},\calY)$ whose input is an i.i.d. source distributed according to $q_\rvX$; the output of the channel is then an i.i.d. process distributed according to $q_\rvY$. The aim is to construct a sequence of codes $\{\mathcal{C}_n\}_{n\geq 1}$ of rate $R$ and increasing block length $n$, such that the output distribution $p_{\rvY^n}$ induced by a uniform choice of the codewords in $\calC_n$ approaches the distribution $q_{\rvY^n}\sim\prod_{i=1}^n q_\rvY$ in variational distance, i.e.
\begin{equation} \label{variational_distance}
\lim_{n\to \infty} \V{p_{\rvY^n},q_{{\rvY}^n}} = 0.
\end{equation}
In this case, the sequence $\{\mathcal{C}_n\}_{n\geq 1}$ is called a sequence of \emph{resolvability codes} achieving \emph{resolution rate} $R$ for $(W_{\rvY|\rvX},q_{\rvX})$. The \emph{channel resolvability} of $W_{\rvY|\rvX}$ is then defined as the minimum resolution rate such that resolvability codes exist for any input source.

\subsection{Coding scheme for channel resolvability}
\label{sec:coding-scheme-chann}

In this section, we leverage the results of~\cite{Mahdavifar2011} to construct resolvability codes when $(\calX,W_{\rvY|\rvX},\calY)$ is a binary-input symmetric DMC and $q_\rvX$ is the uniform distribution on $\{0,1\}$, i.e. $q_\rvX\sim\calB(\frac{1}{2})$; this result will be exploited in Section~\ref{sec:polar-strong-coord} for the problem of coordination. We use the notion of symmetry in~\cite{InformationTheoryReliableCommunication}, according to which there exists a permutation $\pi_1:\mathcal{Y} \to \mathcal{Y}$ such that $\pi_1=\pi_1^{-1}$ and 
\begin{equation} \label{symmetry}
\forall y \in \mathcal{Y}, \quad W_{\rvY|\rvX}(y|0)=W_{\rvY|\rvX}(\pi_1(y)|1)
\end{equation} 
In particular, the following property of symmetric channels will be useful. 

\begin{lemma}[\cite{InformationTheoryReliableCommunication}]
\label{lm:symmetric_channels}
  If $(\calX,W_{\rvY|\rvX},\calY)$ is a memoryless symmetric channel and if $q_{{\rvY}}$ is the output distribution corresponding to the uniform input distribution $q_\rvX$ on $\mathcal{X}$, then
  \begin{align*}
    \forall x \in \mathcal{X}, \quad
    C(W_{\rvY|\rvX})=\avgD{W_{\rvY|\rvX=x}}{q_{{\rvY}}},
  \end{align*}
  where $W_{\rvY|\rvX=x}$ is the output distribution induced by the fixed symbol $x$.
\end{lemma}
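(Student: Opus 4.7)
The plan is to combine two standard ingredients: (i) the identity $I(\rvX;\rvY)=\sum_x q_\rvX(x)\,\mathbb{D}(W_{\rvY|\rvX=x}\Vert q_\rvY)$ and (ii) the fact that for symmetric channels the uniform input distribution is capacity-achieving and makes the per-symbol divergences equal across inputs.

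First I would observe that for any input distribution $q_\rvX$ with induced output $q_\rvY$, the mutual information decomposes as
\begin{align*}
I(\rvX;\rvY)=\sum_{x\in\mathcal{X}} q_\rvX(x)\,\mathbb{D}(W_{\rvY|\rvX=x}\Vert q_\rvY).
\end{align*}
Next, I would use the symmetry permutation $\pi_1$ of \eqref{symmetry}. Because $q_\rvX$ is uniform on $\{0,1\}$, the induced output satisfies
\begin{align*}
q_\rvY(y)=\tfrac{1}{2}W_{\rvY|\rvX}(y|0)+\tfrac{1}{2}W_{\rvY|\rvX}(y|1)=\tfrac{1}{2}W_{\rvY|\rvX}(\pi_1(y)|1)+\tfrac{1}{2}W_{\rvY|\rvX}(\pi_1(y)|0)=q_\rvY(\pi_1(y)),
\end{align*}
i.e.\ $q_\rvY$ is $\pi_1$-invariant. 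A direct change of variable $y\mapsto\pi_1(y)$ in the sum defining $\mathbb{D}(W_{\rvY|\rvX=0}\Vert q_\rvY)$ then gives $\mathbb{D}(W_{\rvY|\rvX=0}\Vert q_\rvY)=\mathbb{D}(W_{\rvY|\rvX=1}\Vert q_\rvY)$, so the divergence is independent of $x$ and equals $I(\rvX;\rvY)$ by the decomposition above.

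Finally, I would appeal to the classical fact that for a symmetric DMC the uniform input distribution maximizes $I(\rvX;\rvY)$, so $I(\rvX;\rvY)=C(W_{\rvY|\rvX})$, yielding the claim. I expect no real obstacle here: the only slightly delicate point is making the change of variable rigorous (which is immediate because $\pi_1$ is a bijection with $\pi_1=\pi_1^{-1}$), and the statement that uniform input achieves capacity on symmetric channels is standard and attributed to Gallager in the cited reference, so it may be invoked directly.
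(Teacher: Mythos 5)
Your argument is correct, and it is essentially the textbook proof of this fact; note that the paper itself offers no proof at all for Lemma~\ref{lm:symmetric_channels} --- it is quoted directly from Gallager's book --- so there is no ``paper proof'' to deviate from. Your three ingredients (the decomposition $\avgI{\rvX;\rvY}=\sum_x q_\rvX(x)\avgD{W_{\rvY|\rvX=x}}{q_\rvY}$, the $\pi_1$-invariance of $q_\rvY$ under uniform input together with the change of variable $y\mapsto\pi_1(y)$, and the classical fact that uniform inputs achieve capacity on symmetric channels) fit together exactly as you describe. The one caveat worth flagging is scope: your proof is written for a binary input alphabet with the pairwise involution $\pi_1$ of Eq.~\eqref{symmetry}, whereas the lemma is stated for a general $\calX$ and, more importantly, is later applied in the proof of Proposition~\ref{prop:resolvablity_polar} to the composite channel $W_{\rvY^n|\rvS^{n-r}}$, whose input alphabet is $\{0,1\}^{n-r}$ and whose symmetry is established in the sense of Gallager (partition of the output alphabet into sub-matrices in which rows are permutations of one another and columns are permutations of one another). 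Your argument generalizes with only cosmetic changes: within each sub-matrix the equal column sums make $q_\rvY$ constant, and the rows being permutations of one another makes each sub-matrix's contribution to $\avgD{W_{\rvY|\rvX=x}}{q_\rvY}$ independent of $x$; combined with the uniform-input-achieves-capacity theorem for symmetric channels this gives the lemma in the generality actually needed. So the proof is sound, but you should either state it under Gallager's definition of symmetry or add the remark that the binary $\pi_1$ argument extends to that setting.
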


Let $W_n^{(i)}$ denote the set of bit channels corresponding to $W_{\rvY|\rvX}$, and define the sets of ``good bits'' $\mathcal{G}_n$ and ``bad bits'' $\mathcal{B}_n$ as
\begin{align*}
  \calG_n &\eqdef\left\{i\in\intseq{1}{n}:C(W_n^{(i)})\geq 2^{-n^\beta}\right\},\\
  \text{and }\calB_n&\eqdef\intseq{1}{n}\setminus \calG_{n}.
\end{align*}

Our strategy to simulate the i.i.d. process distributed according to $q_{\rvY}$ is to send random uniform bits on the good bits, and fixed bits on the bad bits. Intuitively, the uniform bits will be preserved by the noiseless bit-channels, while the pure noise bit-channels will produce almost-uniform bits for any input. Formally, let $r=\abs{\mathcal{G}_n}$ and consider the polar codes defined in Section~\ref{sec:preliminaries}. We will use the $(n,r,\mathcal{G}_n,\mathbf{0}^{n-r})$ \emph{coset code} $\mathcal{C}_n$~\cite{Arikan2009} obtained by using $\mathcal{G}_n$ as the set of information bits and $\mathcal{B}_n$ as the set of \emph{frozen} bits. \smallskip{}
 
\begin{proposition}
\label{prop:resolvablity_polar}
If the channel $(\calX,W_{\rvY|\rvX},\calY)$ is symmetric and $q_{\rvX}\sim\calB(\frac{1}{2})$, then $\{\mathcal{C}_n\}_{n\geq 1}$ is a sequence of resolvability codes of resolution rate $C(W_{\rvY|\rvX})$ for $(W_{\rvY|\rvX},q_{\rvX})$.
\end{proposition}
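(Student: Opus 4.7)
The statement contains two claims: that the code rate $|\calG_n|/n$ tends to $C(W_{\rvY|\rvX})$, and that $\V{p_{\rvY^n}, q_{\rvY^n}} \to 0$. I would prove them separately. For the rate, Arikan's polarization theorem in the refined form of~\cite{Mahdavifar2011} guarantees that for any $\beta<1/2$ the number of indices $i\in\intseq{1}{n}$ with $C(W_n^{(i)})\geq 2^{-n^\beta}$ is $nC(W_{\rvY|\rvX}) + o(n)$, so the rate $|\calG_n|/n$ converges to $C(W_{\rvY|\rvX})$.

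For the variational distance, my plan is to pass to Kullback--Leibler via Pinsker,
$$\V{p_{\rvY^n}, q_{\rvY^n}}^2 \leq 2\ln 2 \cdot \avgD{q_{\rvY^n}}{p_{\rvY^n}},$$
and then bound the KL using the polar bit-channel decomposition. Introduce the auxiliary $\rvU^n$ via $\rvX^n = \rvU^n G_n$; under the target $q$ the vector $\rvU^n$ is i.i.d.\ uniform (since $q_{\rvX^n}$ is uniform by hypothesis and $G_n$ is invertible over $\mathbb{F}_2$), while under $p$ it has $\rvU_{\calG_n}$ uniform and $\rvU_{\calB_n}=\mathbf{0}$. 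Expanding the KL along the bit-channels $W_n^{(i)}$, the contributions from $i\in\calG_n$ vanish because both $p$ and $q$ place a uniform distribution on $\rvU_i$, while for $i\in\calB_n$ the condition $C(W_n^{(i)}) < 2^{-n^\beta}$ together with channel symmetry (Lemma~\ref{lm:symmetric_channels}) implies that replacing the uniform distribution of $\rvU_i$ by the deterministic value $0$ contributes at most $O(2^{-n^\beta})$ to the KL. Summing over the at most $n$ bad bits and applying Pinsker yields $\V{p_{\rvY^n}, q_{\rvY^n}} = O\bigl(\sqrt{n \cdot 2^{-n^\beta}}\bigr) \to 0$.

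The main obstacle is making the bit-by-bit decomposition rigorous. The naive data-processing bound $\avgD{p_{\rvY^n}}{q_{\rvY^n}} \leq \avgD{p_{\rvU^n}}{q_{\rvU^n}} = |\calB_n|$ is far too loose to yield convergence, and the bound in the reverse direction $\avgD{q_{\rvU^n}}{p_{\rvU^n}}$ is infinite because $q_{\rvU^n}$ places mass on sequences outside the support of $p_{\rvU^n}$. The fix is to avoid any KL between the joint laws on $\rvU^n$ and to isolate the contribution of each bad bit at the output level, using the fact that under $q$ the bit $\rvU_i$ is nearly independent of $(\rvY^n, \rvU_1^{i-1})$ when $i\in\calB_n$, and that for symmetric channels this near-independence is robust to replacing the uniform distribution of $\rvU_i$ by the specific value $0$. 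The key supply from~\cite{Mahdavifar2011} is the fast polarization rate $2^{-n^\beta}$, which is exactly what makes the sum over $|\calB_n|\leq n$ bad bits vanish as $n\to\infty$.
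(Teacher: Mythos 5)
Your rate claim is fine and matches the paper (it is exactly \cite[Proposition 20]{Mahdavifar2011}), and your target bound $\V{p_{\rvY^n},q_{\rvY^n}}=O(\sqrt{n\,2^{-n^\beta}})$ is the same as the paper's. The gap is in the middle: the step you yourself flag as ``the main obstacle'' is precisely the nontrivial content of the proposition, and your sketch describes the desired conclusion rather than a mechanism for it. Concretely, the bit channels $W_n^{(i)}$ are defined with \emph{all} other inputs uniform, so the statement ``freezing $\rvU_i$ to $0$ costs at most $C(W_n^{(i)})$'' is only immediately available when the remaining bits are uniform. In any bit-by-bit expansion or hybrid/telescoping argument over $\calB_n$, once you have frozen some bad bits the conditioning law of $\rvU_1^{i-1}$ (or $\rvU_{i+1}^n$) is no longer uniform, and the per-bit bound no longer applies as stated; making it apply requires a symmetry argument showing the conditional divergence is constant over the conditioning values (or an explicit composite-channel construction). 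You never supply this chaining step, so the sum ``$O(2^{-n^\beta})$ per bad bit'' is not justified. There is also a direction slip: you invoke Pinsker with $\avgD{q_{\rvY^n}}{p_{\rvY^n}}$, but this divergence can be $+\infty$ (e.g., for an erasure-type $W_{\rvY|\rvX}$ the coset code's output support is a strict subset of the support of $q_{\rvY^n}$), and the symmetric-channel identity you would need (Lemma~\ref{lm:symmetric_channels}) controls the opposite direction, $\avgD{p_{\rvY^n}}{q_{\rvY^n}}$.

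For comparison, the paper avoids the chaining issue entirely by treating all frozen bits \emph{jointly} as the input of a single composite channel $W_{\rvY^n|\rvS^{n-r}}$ (polar transform plus uniform bits on $\calG_n$). By \cite[Proposition 13]{Mahdavifar2011} this channel is symmetric and $C(W_{\rvY^n|\rvS^{n-r}})\leq\sum_{i\in\calB_n}C(W_n^{(i)})\leq (n-r)2^{-n^\beta}$; then one application of Lemma~\ref{lm:symmetric_channels} to this composite channel gives the exact identity $\avgD{p_{\rvY^n}}{q_{\rvY^n}}=C(W_{\rvY^n|\rvS^{n-r}})$, since $p_{\rvY^n}$ is the output for the constant input $\mathbf{0}^{n-r}$ and $q_{\rvY^n}$ is the output for a uniform input. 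Pinsker then finishes. If you want to salvage your bit-level route, you must either (i) prove a symmetry lemma showing that, for each $i\in\calB_n$, the divergence between the output law given $\rvU_i=0$ and the output law given uniform $\rvU_i$ is the same for every fixed value of the already-frozen bits (so the hybrid argument goes through), or (ii) simply import the composite-channel result as the paper does. As written, the proposal does not constitute a proof.
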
\smallskip{}

\begin{proof}

We know from~\cite[Proposition 20]{Mahdavifar2011} that
\begin{align*}
  \lim_{n\rightarrow\infty }\frac{r}{n} = C(W_{\rvY|\rvX}),
\end{align*}
so that the condition regarding the resolution rate is satisfied. Following~\cite{Mahdavifar2011}, given two vectors $\mathbf{x}^r \in \{0,1\}^r$ and $\mathbf{s}^{n-r} \in \{0,1\}^{n-r}$, we let $(\mathbf{x}^r,\mathbf{s}^{n-r})$ denote the vector $\mathbf{v}^n \in \{0,1\}^n$ such that $\mathbf{v}_{|\mathcal{G}_n}=\mathbf{x}^r$ and $v_{|\mathcal{B}_n}=\mathbf{s}^{n-r}$. We then define a composite channel $(\{0,1\}^{n-r},W_{\rvY^n|\rvS^{n-r}},\mathcal{Y}^n)$, which includes the polar code and the random bits sent on the good bits $\mathcal{G}_n$, so that
\begin{multline*}
  W_{\rvY^n|\rvS^{n-r}}(\mathbf{y}^n|\mathbf{s}^{n-r})\\
  \eqdef\frac{1}{2^r} \sum_{\mathbf{x}^r \in \{0,1\}^r}
  W_{\rvY^n|\rvX^n}\left(\mathbf{y}^n\Bigg|(\mathbf{x}^r,\mathbf{s}^{n-r})G_n\right).
\end{multline*}
It is shown in~\cite[Proposition 13]{Mahdavifar2011} that $W_{\rvY^n|\rvS^{n-r}}$ is symmetric and that
\begin{align*}
  C(W_{\rvY^n|\rvS^{n-r}})\leq \sum_{i \in \mathcal{B}_n} C(W_n^{(i)})\leq(n-r)2^{-n^{\beta}}.
\end{align*}
We now show that this last inequality implies that $\{\calC_n\}_{n\geq 1}$ form a sequence of resolvability codes.

By the definition of coset codes~\cite{Arikan2009}, the output distribution $p_{\rvY^n}$ induced by the code $\calC_n$ coincides with the output distribution $W_{\rvY^n|\rvS^{n-r}=\mathbf{0}^{n-r}}$ of the constant input $\mathbf{0}^{n-r}$ through $W_{\rvY^n|\rvS^{n-r}}$. Moreover, since $W_{\rvY^n|\rvS^{n-r}}$ is symmetric and $G_n$ is full-rank, the output of the channel $W_{\rvY^n|\rvS^{n-r}}$ to a uniformly distributed input on $\{0,1\}^{n-r}$ has the desired output distribution $q_{{\rvY}^n}$. Hence, applying Lemma~\ref{lm:symmetric_channels} to the channel $W_{\rvY^n|\rvS^{n-r}}$, we find that 
\begin{align*}
\avgD{p_{\rvY^n}}{q_{{\rvY}^n}}&=\avgD{W_{{\rvY^n}|\rvS^{n-r}=\mathbf{0}^{n-r}}}{ q_{{\rvY}^n}}
=C(W_{\rvY^n|\rvS^{n-r}}),
\end{align*}
so that $\lim_{n\rightarrow\infty}\avgD{p_{\rvY^n}}{q_{{\rvY}^n}} = 0$. Pinsker's inequality then ensures that
\begin{align*}
\lim_{n\rightarrow\infty}  \V{p_{\rvY^n},q_{{\rvY}^n}}= 0
\end{align*}
\end{proof}
\begin{remark}
  \label{rk:1}
  The choice of frozen bits set at $\mathbf{0}^{n-r}$ is arbitrary. The choice of a different coset code characterized by $\mathbf{u}_{F}$ in place of $\mathbf{0}^{n-r}$ does not alter the reasoning. In particular, the symmetry of the channel $W_{\rvY^n|\rvS^{n-r}}$ and Lemma~\ref{lm:symmetric_channels} still hold.
\end{remark}
\section{Strong Coordination with Polar Codes}
\label{sec:polar-strong-coord}

\subsection{Strong coordination for a two-node network}
\label{sec:coord-two-node}
The problem of strong coordination for the two-node network~\cite{Cuff2008} is illustrated in Figure~\ref{fig:network}. Node $\rvX$ with actions distributed according to $q_{\rvX^n}\sim\prod_{i=1}^nq_\rvX$ and given by nature wishes to coordinate with node $\rvY$ to obtain the joint distribution of actions $q_{\rvX^n\rvY^n}\sim\prod_{i=1}^nq_{\rvX\rvY}$. Nodes $\rvX$ and $\rvY$ have access to an independent source of common randomness, which provides uniform random numbers in $\intseq{1}{2^{nR_0}}$, and node $\rvX$ transmits messages in $\intseq{1}{2^{nR}}$ to node $\rvY$. Specifically, a $(2^{nR},2^{nR_0},n)$ coordination code $\calC_n$ for this network consists of a stochastic encoding function
\begin{align*}
  f:\calX^n\times\intseq{1}{2^{nR_0}}\rightarrow \intseq{1}{2^{nR}}
\end{align*}
and of a stochastic decoding function
\begin{align*}
  g:\intseq{1}{2^{nR}}\times\intseq{1}{2^{nR_0}}\rightarrow\calY^n.
\end{align*}
\begin{figure}[t]
  \centering
      \scalebox{0.7}{\input{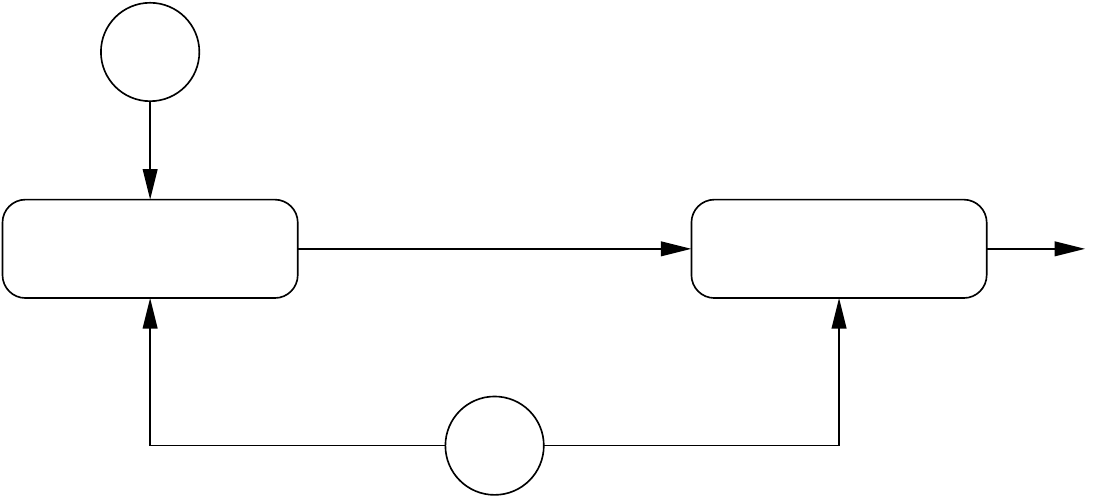_t}}
  \caption{Coordination for two-node network}
  \label{fig:network}
\end{figure} 
We let $\rvU_0\in\intseq{1}{2^{nR_0}}$ denote the common randomness
and ${p}_{\rvX^ng(f(\rvX^n,\rvU_0))}$ be the distribution induced by the coordination code. A coordination $q_{\rvX\rvY}$ is achievable with rates $(R,R_0)$ if there exists a sequence of $(2^{nR},2^{nR_0},n)$ coordination codes $\{\calC_n\}_{n\geq 1}$ such that
\begin{align*}
  \lim_{n\rightarrow\infty}\V{p_{\rvX^ng(f(\rvX^n),\rvU_0)},q_{\rvX^n \rvY^n}}=0
\end{align*}
Let $\rvX$ and $\rvY$ be the random variables with joint distribution $q_{\rvX\rvY}$. It is shown in~\cite{Cuff2008} that the set of achievable rates $(R,R_0)$ is the following.
\begin{theorem}[{\cite[Theorem 3.1]{Cuff2008}}]
\label{th:coordination_capacity}
  The set of achievable rates $(R,R_0)$ for coordination $q_{\rvX\rvY}$ is
  \begin{align*}
    \bigcup_{X\rightarrow\rvV\rightarrow\rvY}\left\{(R,R_0):
      \begin{array}{l}
        R+R_0\geq \avgI{\rvX\rvY;\rvV}\\
        R\geq \avgI{\rvX;\rvV}
      \end{array}
\right\}
  \end{align*}
\end{theorem}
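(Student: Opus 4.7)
The plan is to split the proof into the standard two halves: an achievability argument showing every rate pair in the stated region admits a sequence of coordination codes, and a converse showing no other pairs are achievable. The center of gravity is the achievability part, because it is there that the auxiliary random variable $\rvV$ and the two rate constraints naturally arise; the converse is a relatively mechanical single-letterization once the right $\rvV_i$ is chosen.

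For achievability, I would use a random-coding / soft-covering (channel resolvability) argument. Fix a Markov chain $\rvX\rightarrow\rvV\rightarrow\rvY$ consistent with $q_{\rvX\rvY}$. Draw a random codebook $\{\svbv(m,u_0)\}$ with $m\in\intseq{1}{2^{nR}}$ and $u_0\in\intseq{1}{2^{nR_0}}$, each codeword i.i.d.\ according to $q_\rvV^n$. The common randomness picks $\rvU_0$; the encoder, given $\rvX^n$ and $\rvU_0$, uses a likelihood encoder that samples $M$ with probability proportional to $q_{\rvX^n|\rvV^n}(\rvX^n|\svbv(m,\rvU_0))$; the decoder passes $\svbv(M,\rvU_0)$ through the memoryless test channel $q_{\rvY|\rvV}$. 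The total variation between the induced $p_{\rvX^n\rvY^n}$ and $q_{\rvX^n\rvY^n}$ is then bounded using two soft-covering statements: the condition $R+R_0\geq\avgI{\rvX\rvY;\rvV}$ guarantees that the codebook, viewed as a resolvability code for the ``channel'' $q_{\rvX\rvY|\rvV}$, synthesizes the joint $q_{\rvX^n\rvY^n}$ in variational distance; the condition $R\geq\avgI{\rvX;\rvV}$ guarantees that, conditioned on $\rvU_0$, the likelihood encoder does not disturb the marginal of $\rvX^n$ (equivalently, it behaves like a Wyner-style common-information simulation). Averaging over the random codebook and invoking Markov's inequality extracts a deterministic code achieving the variational-distance requirement.

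For the converse, I would start from a sequence of codes with $\V{p_{\rvX^ng(f(\rvX^n,\rvU_0),\rvU_0)}}{q_{\rvX^n\rvY^n}}\to 0$, choose the auxiliary $\rvV_i\eqdef(M,\rvU_0,\rvX_1^{i-1},\rvY_{i+1}^n)$ (the usual Gallager/Wyner-style choice), and verify both the Markov chain $\rvX_i\rightarrow\rvV_i\rightarrow\rvY_i$ and the two rate inequalities via a standard telescoping expansion of $\avgI{\rvX^n\rvY^n;M,\rvU_0}$ and $\avgI{\rvX^n;M}$. The closeness in total variation lets me replace the law of $(\rvX^n,\rvY^n)$ under $p$ by $q_{\rvX^n\rvY^n}$ in all entropic quantities, at the cost of a vanishing additive term; the standard uniform-continuity bound for entropy over finite alphabets (the variant of Csiszár's lemma used by Cuff) is the right tool. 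A time-sharing variable $T$ uniform on $\intseq{1}{n}$ independent of everything else then turns the $\frac{1}{n}\sum_i$ into a single-letter mutual information and closes the converse.

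The main obstacle is the achievability total-variation bound, specifically the interaction between the soft-covering lemma (whose natural output is closeness of the $\rvV$- and $\rvY$-marginals to the i.i.d.\ target) and the likelihood encoder (which must preserve the $\rvX^n$-marginal exactly). A clean way to handle this is to introduce an idealized joint distribution in which $\rvV^n$ is drawn jointly with $\rvX^n$ as if from $q_{\rvX\rvV}^n$, bound the variational distance between this ideal distribution and the codebook-induced one using the two soft-covering inequalities, and then use the data-processing inequality for variational distance through the memoryless channel $q_{\rvY|\rvV}$ to conclude. The delicacy lies in making sure both rate constraints are active simultaneously and in handling the common randomness $\rvU_0$ as a shared index rather than an extra source of confusion.
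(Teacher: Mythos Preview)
The paper does not prove this theorem; it is quoted from \cite{Cuff2008} as a known characterization of the strong-coordination capacity region and serves only as the benchmark against which the polar-code construction of Proposition~\ref{prop:stonrg_coordination} is compared in Section~\ref{sec:discussion}. There is therefore no proof in the paper to set your sketch against.

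For what it is worth, your outline is the standard argument from Cuff's work and is correct in spirit: achievability via a random $q_\rvV^n$ codebook indexed by $(m,u_0)$, a likelihood encoder at node $\rvX$, and two soft-covering bounds matched to the constraints $R\geq\avgI{\rvX;\rvV}$ and $R+R_0\geq\avgI{\rvX\rvY;\rvV}$; converse via an auxiliary of the form $\rvV_i=(M,\rvU_0,\rvX_1^{i-1},\rvY_{i+1}^n)$, time-sharing, and uniform continuity of entropy to pass from the induced law to $q_{\rvX^n\rvY^n}$. If you were to flesh this out, the only place requiring care is exactly the one you flag: coupling the idealized joint $q_{\rvX\rvV}^n$ to the codebook-induced joint so that both soft-covering conditions fire simultaneously, and then pushing the bound through $q_{\rvY|\rvV}$ by data processing for total variation.
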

\smallskip{}
In the sequel, we restrict our attention to the case where $\calX=~\{0,1\}$, $q_\rvX\sim~\calB(1/2)$, and the conditional distribution of actions $q_{\rvY|\rvX}$ is symmetric.

\subsection{Coding scheme for strong coordination}
\label{sec:coding-scheme}
In this section, we describe the proposed scheme to achieve strong coordination. Let $\rvX$ and $\rvY$ be the random variables with joint distribution $q_{\rvX\rvY}$, and let $\rvV\in\{0,1\}$ be a binary random variable satisfying the following conditions.
\begin{itemize}
\item[\textbf{C1:}] $\rvX\rightarrow\rvV\rightarrow\rvY$ forms a Markov chain;
\item[\textbf{C2:}] the transition probability $W_{\rvX|\rvV}$ corresponds to a binary symmetric channel;
\item[\textbf{C3:}] the transition probability $W_{\rvY|\rvV}$ is symmetric.
\end{itemize}
By assumption, such a random variable $\rvV$ exists and is distributed according to $\calB(1/2)$.

We first construct polar codes of length $n\eqdef 2^m$ for the channel with transition probabilities $W_{\rvY\rvX|\rvV}$ as follows. 
\begin{itemize}
\item  For the symmetric channel $W_{\rvY\rvX|\rvV}$, and for $i\in\intseq{1}{n}$, we let $\overline{W}_n^{(i)}$ be the corresponding set of bit channels. We define the sets 
\begin{align}
  \calG_{\rvY\rvX|\rvV} &\eqdef\left\{i\in\intseq{1}{n}:C(\overline{W}_n^{(i)})\geq 2^{-n^\beta}\right\},\nonumber{}\\
  \calB_{\rvY\rvX|\rvV}&\eqdef\intseq{1}{n}\setminus \calG_{\rvY\rvX|\rvV}.\label{eq:set_1}
\end{align}
\item For the symmetric channel $W_{\rvX|\rvV}$, and for $i\in\intseq{1}{n}$, we let $\widetilde{W}_n^{(i)}$ be the corresponding set of bit channels. We define the sets 
\begin{align}
  \calG_{\rvX|\rvV} &\eqdef\left\{i\in\intseq{1}{n}:C(\widetilde{W}_n^{(i)})\geq 2^{-n^\beta}\right\},\nonumber{}\\
  \calB_{\rvX|\rvV}&\eqdef\intseq{1}{n}\setminus \calG_{\rvX|\rvV}.\label{eq:set_2}
\end{align}
\end{itemize}
The sets defined in Eq.~\eqref{eq:set_1} and~\eqref{eq:set_2} satisfy the following property.
\begin{lemma}
  \label{lm:degradation}
$\calG_{\rvX|\rvV}\subset \calG_{\rvY\rvX|\rvV}$ and $\calB_{\rvY\rvX|\rvV}\subset\calB_{\rvX|\rvV}$.
\end{lemma}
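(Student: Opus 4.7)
The plan is to prove the lemma by recognizing that $W_{\rvX|\rvV}$ is a degraded version of $W_{\rvY\rvX|\rvV}$, and then invoking the standard fact from the polar coding literature that degradation is preserved by the Ar{\i}kan transform at every bit-channel level. The first observation is essentially tautological: the output $\rvX$ is a deterministic (projection) function of the pair $(\rvY,\rvX)$, so we can view $W_{\rvX|\rvV}$ as the composition of $W_{\rvY\rvX|\rvV}$ with the coordinate-projection map $(y,x)\mapsto x$. In particular, $W_{\rvX|\rvV}$ is physically degraded with respect to $W_{\rvY\rvX|\rvV}$.

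Next, I would invoke the standard preservation-of-degradation property for polarized bit channels (first noted for polar codes by Korada--Urbanke, following Ar{\i}kan): if a channel $W'$ is stochastically degraded with respect to a channel $W$, then for each $i\in\intseq{1}{n}$ the bit-channel $W'^{(i)}_n$ is stochastically degraded with respect to $W^{(i)}_n$. This fact applies here because both $W_{\rvX|\rvV}$ and $W_{\rvY\rvX|\rvV}$ share the same binary input alphabet and are polarized by the same Ar{\i}kan transform $G_n$. Capacity is monotone under degradation, so we obtain
\begin{align*}
C(\widetilde{W}_n^{(i)})\leq C(\overline{W}_n^{(i)}) \quad\text{for every }i\in\intseq{1}{n}.
\end{align*}

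The two inclusions then follow directly from the definitions in~\eqref{eq:set_1} and~\eqref{eq:set_2}. If $i\in\calG_{\rvX|\rvV}$, then $C(\widetilde{W}_n^{(i)})\geq 2^{-n^\beta}$, so by the capacity inequality above $C(\overline{W}_n^{(i)})\geq 2^{-n^\beta}$, i.e.\ $i\in\calG_{\rvY\rvX|\rvV}$. This is the first inclusion; the second is just its complement within $\intseq{1}{n}$.

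There is no real obstacle in this argument beyond correctly identifying the direction of degradation and citing the relevant bit-channel preservation result; both are standard. The only care needed is to state explicitly that we are using the \emph{same} kernel $G_n$ on both sides (so that the polarization structures line up index-by-index), which is guaranteed by the common binary input alphabet of $\rvV$.
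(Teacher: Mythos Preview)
Your proposal is correct and follows essentially the same approach as the paper: observe that $W_{\rvX|\rvV}$ is physically degraded with respect to $W_{\rvY\rvX|\rvV}$, invoke the preservation-of-degradation result for polar bit channels (the paper cites \cite[Lemma~21]{Korada2010}), and conclude $C(\widetilde{W}_n^{(i)})\leq C(\overline{W}_n^{(i)})$. The only difference is that you spell out the final set-inclusion step explicitly, whereas the paper leaves it implicit.
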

\begin{proof}
  The channel $W_{\rvX|\rvV}$ is physically degraded with respect to the channel $W_{\rvY\rvX|\rvV}$. Therefore,~\cite[Lemma 21]{Korada2010} guarantees that, for all $i\in\intseq{1}{n}$,  $\widetilde{W}_n^{(i)}$ is degraded with respect to $\overline{W}_n^{(i)}$, so that
  \begin{align*}
    C(\widetilde{W}_n^{(i)})\leq    C(\overline{W}_n^{(i)}). 
    \tag*{\QED}
  \end{align*}
  \let\QED\relax
\end{proof}
Consequently, the sets $F_1$, $F_2$ and $F_3$ defined as
\begin{align*}
  F_1&\eqdef \calB_{\rvY\rvX|\rvV},\\
  F_2&\eqdef \calG_{\rvY\rvX|\rvV}\cap\calB_{\rvX|\rvV},\\
  F_3&\eqdef \calG_{\rvY\rvX|\rvV}\cap\calG_{\rvX|\rvV}.
\end{align*}
form a partition of $\intseq{1}{n}$, which is illustrated in Figure~\ref{fig:nested}.

\begin{figure}[t]
  \centering
      \scalebox{0.8}{\input{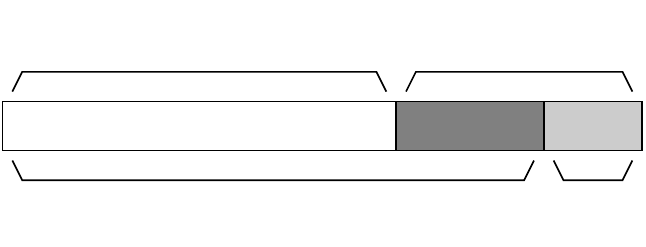_t}}
  \caption{Illustration of partition sets $F_1$, $F_2$, $F_3$ (after reordering of indices).}
  \label{fig:nested}
\end{figure} 
We now exploit these sets to construct a coordination code. The bits in positions $F_1$ are frozen bits with values $\mathbf{u}_{F_1}~=~\mathbf{0}_{F_1}$ fixed at all times. The encoding and decoding procedures are then the following. \smallskip

\noindent \textbf{Operation at node $\rvX$.} To encode a sequence of binary actions $\mathbf{x}\in\calX^n$ provided by nature, node $\rvX$ performs successive-cancellation (SC) encoding to determine the value of the bits $\mathbf{u}_{F_3}$ in $F_3$, using the bits $\mathbf{u}_{F_2}$ from the common randomness in positions $F_2$ and the frozen bits $\mathbf{u}_{F_1}$ in position $F_1$. Specifically, the probability of obtaining a bit $u_i$ during SC encoding is the following~\cite{Korada2010}.
\begin{align} \label{SC_encoding_fixed}
\tilde{p}(u_i|\mathbf{x},\mathbf{u}_1^{i-1})=\begin{cases} 
1 & i \in F_1, u_i=(\mathbf{u}_{F_1})_i \\
0 & i \in F_1, u_i \neq (\mathbf{u}_{F_1})_i \\
\frac{1}{2} & i \in F_2 \\
\frac{L_n^{(i)}(\mathbf{x},\mathbf{u}_1^{i-1})}{1+L_n^{(i)}(\mathbf{x},\mathbf{u}_1^{i-1})} & i \in F_3, u_i=0\\
\frac{1}{1+L_n^{(i)}(\mathbf{x},\mathbf{u}_1^{i-1})} & i \in F_3, u_i=1
\end{cases}
\end{align}  
where
\begin{align*}
  L_n^{(i)}(\mathbf{x},\mathbf{u}_1^{i-1})\eqdef\frac{\widetilde{W}_n^{(i)}(\mathbf{x},\mathbf{u}_1^{i-1}|0)}{\widetilde{W}_n^{(i)}(\mathbf{x},\mathbf{u}_1^{i-1}|1)}.
\end{align*}

The bits in $F_3$ are then transmitted to node $\rvY$. Note that the encoding complexity is that of SC encoding, which is $O(n\log n)$.\smallskip{}

\noindent{}\textbf{Operation at node $\rvY$.} To create a sequence of coordinated actions
  $\mathbf{y}\in\calY^n$, node $\rvY$ creates a vector $\mathbf{u}$
  with frozen bits $\mathbf{u}_{F_1}$, common randomness bits
  $\mathbf{u}_{F_2}$, and received bits $\mathbf{u}_{F_3}$ in
  positions $F_1$, $F_2$, $F_3$, respectively. It then computes the
  vector $\mathbf{u}G_n$, and simulates its transmission over a
  memoryless channel with transition probabilities
  $W_{\rvY|\rvV}$. The resulting vector $\mathbf{y}$ is used as the
  sequence of coordinated actions. The encoding complexity is again $O(n\log n)$.

\begin{remark}
  \label{rk:2}
  Nodes $\rvX$ and $\rvY$ require randomness to perform either SC encoding or simulate a memoryless channel. The evaluation of encoding complexity implicitly assumes that the cost of generating randomness bit-wise is $O(n)$ in both cases.
\end{remark}
The constructed scheme operates at rate $R\eqdef\frac{\card{F_3}}{n}$ between nodes $\rvX$ and $\rvY$ and requires a rate $R_0\eqdef \frac{\card{F_2}}{n}$ of common randomness. Our main result, which we establish in Section~\ref{sec:proof-proposition}, is the following.\smallskip{}
\begin{proposition}
  \label{prop:stonrg_coordination}
    For any random variable $\rvV$ satisfying the conditions \textbf{C1}, \textbf{C2}, and \textbf{C3}, the coordination $q_{\rvX\rvY}$ is achievable with any rates $(R,R_0)$ such that 
  \begin{align*}
    R+R_0>C(W_{\rvY\rvX|\rvV})\quad\text{and}\quad R>C(W_{\rvX|\rvV})
  \end{align*}
\end{proposition}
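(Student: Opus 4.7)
The plan is to bound $\V{p_{\rvX^n\rvY^n},q_{\rvX^n\rvY^n}}$ by inserting an auxiliary distribution to which Proposition~\ref{prop:resolvablity_polar} directly applies, and then to control the discrepancy between the scheme and this auxiliary distribution.

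First I would verify that $W_{\rvY\rvX|\rvV}$ is a symmetric channel: conditions \textbf{C2} and \textbf{C3}, together with the Markov property \textbf{C1}, provide a product output permutation, and with the uniform input $\rvV\sim\calB(1/2)$ the induced output distribution is exactly $q_{\rvX\rvY}$. I would then introduce the auxiliary distribution $\pi$ on $(\rvX^n,\rvU^n,\rvY^n)$ obtained by fixing $\mathbf{u}_{F_1}=\mathbf{0}$, drawing $\mathbf{u}_{F_2\cup F_3}$ uniformly and independently, setting $\mathbf{v}=\mathbf{u}G_n$, and then simulating $\mathbf{x}$ and $\mathbf{y}$ through $W_{\rvX|\rvV}^n$ and $W_{\rvY|\rvV}^n$. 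Since $F_1=\calB_{\rvY\rvX|\rvV}$ and $F_2\cup F_3=\calG_{\rvY\rvX|\rvV}$, Proposition~\ref{prop:resolvablity_polar} (combined with Remark~\ref{rk:1}) yields $\V{\pi_{\rvX^n\rvY^n},q_{\rvX^n\rvY^n}}\to 0$. By the triangle inequality it then suffices to show $\V{p_{\rvX^n\rvY^n},\pi_{\rvX^n\rvY^n}}\to 0$; because $p$ and $\pi$ share the same conditional $\rvY^n\mid\rvU^n = W_{\rvY|\rvV}^n(\cdot\mid\mathbf{u}G_n)$, this quantity is in turn upper-bounded by $\V{p_{\rvX^n\rvU^n},\pi_{\rvX^n\rvU^n}}$.

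I would next decompose $\V{p_{\rvX^n\rvU^n},\pi_{\rvX^n\rvU^n}}$ by chaining on $\mathbf{u}_{F_2}$, which is drawn uniformly and identically under both $p$ and $\pi$. For each fixed $\mathbf{u}_{F_2}$, the marginal of $\rvX^n$ under $\pi$ is the output of a coset code for the degraded channel $W_{\rvX|\rvV}$; by Lemma~\ref{lm:degradation} its good set is exactly $F_3=\calG_{\rvX|\rvV}$ and its bad set is $F_1\cup F_2=\calB_{\rvX|\rvV}$, so Proposition~\ref{prop:resolvablity_polar} together with Remark~\ref{rk:1} gives $\V{\pi_{\rvX^n\mid\mathbf{u}_{F_2}},q_{\rvX^n}}\to 0$. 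Since $p_{\rvX^n}=q_{\rvX^n}$ by assumption, the $\rvX^n$-marginals are controlled, and it remains to handle the conditional distributions of $\mathbf{u}_{F_3}$ given $(\mathbf{x},\mathbf{u}_{F_1}=\mathbf{0},\mathbf{u}_{F_2})$.

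This last step is the crux and is where I expect the main obstacle. Under $p$, successive-cancellation produces $\mathbf{u}_{F_3}$ with conditional factorization $\prod_{i\in F_3}q(u_i\mid\mathbf{u}_1^{i-1},\mathbf{x})$, where each factor is the bit-channel posterior under the unconstrained uniform prior on $\rvU^n$. Under $\pi$ the correct conditional is the Bayes posterior of $\mathbf{u}_{F_3}$ given $(\mathbf{x},\mathbf{u}_{F_1}=\mathbf{0},\mathbf{u}_{F_2})$, which differs from the SC factorization only through the additional conditioning on the future $F_1$ positions being zero. Because these positions satisfy $F_1\subseteq\calB_{\rvX|\rvV}$ by Lemma~\ref{lm:degradation}, their bit-channel capacities obey $C(\widetilde{W}_n^{(i)})\leq 2^{-n^\beta}$; combining Lemma~\ref{lm:symmetric_channels}, which identifies $C(\widetilde{W}_n^{(i)})$ with the expected divergence of the bit-posterior to the uniform distribution, with Pinsker's inequality and Jensen's inequality, each such bit's posterior lies on average within $O(2^{-n^\beta/2})$ of uniform in variational distance. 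Telescoping along the $n$ indices via the chain rule for total variation gives a conditional discrepancy of at most $n\cdot O(2^{-n^\beta/2})$, which vanishes for any $\beta<1/2$. Assembling the pieces yields $\V{p_{\rvX^n\rvY^n},q_{\rvX^n\rvY^n}}\to 0$, and the rate conditions $R\to C(W_{\rvX|\rvV})$ and $R+R_0\to C(W_{\rvY\rvX|\rvV})$ follow from $|F_3|/n=|\calG_{\rvX|\rvV}|/n$ and $|F_2\cup F_3|/n=|\calG_{\rvY\rvX|\rvV}|/n$ via~\cite[Proposition 20]{Mahdavifar2011}.
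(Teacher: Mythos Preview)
Your overall architecture coincides with the paper's: introduce the auxiliary distribution $\pi$ (the paper's $\hat p$), use Proposition~\ref{prop:resolvablity_polar} on $W_{\rvY\rvX|\rvV}$ to get $\V{\pi_{\rvX^n\rvY^n},q_{\rvX^n\rvY^n}}\to 0$, drop the common channel $W_{\rvY|\rvV}^n$ to reduce to $\V{p_{\rvX^n\rvU^n},\pi_{\rvX^n\rvU^n}}$, and read off the rates from $|\calG_{\rvX|\rvV}|/n$ and $|\calG_{\rvY\rvX|\rvV}|/n$. The divergence from the paper, and the gap, is in your final telescoping step.

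You work throughout with $\mathbf{u}_{F_1}=\mathbf{0}$ fixed. In the sequential factorization of $\pi$ at index $i\in F_2\cup F_3$, the conditional $\pi(u_i\mid\mathbf{x},\mathbf{u}_1^{i-1})$ is then a \emph{constrained} posterior: the sum over $\mathbf{u}_{i+1}^n$ is restricted to $\mathbf{u}_{F_1}^{>i}=\mathbf{0}$. This constrained posterior is \emph{not} the bit-channel posterior $\widetilde W_n^{(i)}$; in particular, on $F_3$ it does not equal the SC rule, and on $F_2$ the capacity bound $C(\widetilde W_n^{(i)})\leq 2^{-n^\beta}$ does not directly control it. Your appeal to Lemma~\ref{lm:symmetric_channels} and Pinsker bounds the \emph{unconstrained} posterior of each $i\in F_1$, which is the right object only when future bits are uniformly averaged. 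So the claimed telescoping bound $n\cdot O(2^{-n^\beta/2})$ does not follow from the stated ingredients; controlling the effect of conditioning on future $F_1$ bits is exactly the missing piece.

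The paper closes this gap by a two-step maneuver you do not have. First it averages over $\mathbf{u}_{F_1}$: this turns $\hat p$ into the unconstrained joint $\hat P(\mathbf{u},\mathbf{x})=2^{-n}W_{\rvX^n|\rvV^n}(\mathbf{x}\mid\mathbf{u}G_n)$, for which the sequential conditionals are precisely the bit-channel posteriors and, crucially, coincide with the SC rule on $F_3$ (the paper's Remark~\ref{rk:3}). The telescoping then picks up contributions only from $i\in F_1\cup F_2$, each bounded by $C(\widetilde W_n^{(i)})\leq 2^{-n^\beta}$ via Lemma~\ref{lm:symmetric_channels} and Pinsker. Second, to return from the average to the specific choice $\mathbf{u}_{F_1}=\mathbf{0}$, the paper proves a separate lemma (the analogue of Korada's Lemma~10) showing, via the gauge symmetry of the BSC $W_{\rvX|\rvV}$, that $\V{\tilde p,\hat p}$ is \emph{independent} of $\mathbf{u}_{F_1}$. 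That derandomization step is essential and is absent from your sketch.
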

\smallskip{}

\subsection{Proof of Proposition~\ref{prop:stonrg_coordination}}
\label{sec:proof-proposition}

 The proof is a constructive counterpart of the information-theoretic proof in~\cite{Cuff2008}. We first define the distribution $\tilde{p}$ induced by the encoding/decoding procedures described in Section~\ref{sec:coding-scheme}. By definition,
 \begin{multline*}
   \tilde{p}(\mathbf{u}_{F_2},\mathbf{u}_{F_3},\mathbf{x},\mathbf{y})\\
 \eqdef \frac{1}{2^{\abs{F_2}}}\prod_{i\in F_3}\tilde{p}(u_i|\mathbf{x},\mathbf{u}_1^{i-1})q_{\rvX^n}(\mathbf{x})W_{\rvY^n|\rvV^n}(\mathbf{y}|\mathbf{u}G_n),
 \end{multline*}
 where the vector $\mathbf{u}$ is such that $\mathbf{u}_{F_1}=\mathbf{0}_{F_1}$. We also define the distribution $\hat{p}$ induced by the nested polar code with uniform inputs transmitted over the symmetric channel $W_{\rvY\rvX|\rvV}$ (see Section~\ref{sec:coding-scheme-chann}); we have
 \begin{multline*}
       \hat{p}(\mathbf{u}_{F_2},\mathbf{u}_{F_3},\mathbf{x},\mathbf{y}) \\
       \eqdef \frac{1}{2^{\abs{F_2}}}\frac{1}{2^{\abs{F_3}}}W_{\rvX^n|\rvV^n}(\mathbf{x}|\mathbf{u}G_n) W_{\rvY^n|\rvV^n}(\mathbf{y}|\mathbf{u}G_n).
     \end{multline*}
     By applying the triangle inequality repeatedly, we upper bound the variational distance between the induced distribution $\tilde{p}(\mathbf{x},\mathbf{y})$ and the target coordination $q(\mathbf{x},\mathbf{y})$ as follows.  
\begin{align}
  &\sum_{\mathbf{x},\mathbf{y}}\abs{\tilde p(\mathbf{x},\mathbf{y})-q(\mathbf{x},\mathbf{y})}\nonumber{}\\
  &\leq   \sum_{\mathbf{x},\mathbf{y}}\abs{\tilde p(\mathbf{x},\mathbf{y})-\hat p(\mathbf{x},\mathbf{y})}+  \sum_{\mathbf{x},\mathbf{y}}\abs{\hat p(\mathbf{x},\mathbf{y})-q(\mathbf{x},\mathbf{y})}\nonumber{}\\
  &\leq \sum_{\mathbf{x},\mathbf{y},\mathbf{u}_{F_2},\mathbf{u}_{F_3}}\abs{\tilde p(\mathbf{u}_{F_2},\mathbf{u}_{F_3},\mathbf{x},\mathbf{y})-\hat p(\mathbf{u}_{F_2},\mathbf{u}_{F_3},\mathbf{x},\mathbf{y})}\nonumber{}\\
  &\phantom{--------------}+  \sum_{\mathbf{x},\mathbf{y}}\abs{\hat p(\mathbf{x},\mathbf{y})-q(\mathbf{x},\mathbf{y})}\nonumber{}\\
  &\stackrel{(a)}{=}\sum_{\mathbf{x},\mathbf{u}_{F_2},\mathbf{u}_{F_3}}\abs{\tilde p(\mathbf{u}_{F_2},\mathbf{u}_{F_3},\mathbf{x})-\hat p(\mathbf{u}_{F_2},\mathbf{u}_{F_3},\mathbf{x})}\nonumber{}\\
  &\phantom{--------------}+  \sum_{\mathbf{x},\mathbf{y}}\abs{\hat p(\mathbf{x},\mathbf{y})-q(\mathbf{x},\mathbf{y})}\nonumber{}\\
  &\eqdef \V{\tilde{p}_{\rvU_{F_2}\rvU_{F_3}\rvX^n},\hat{p}_{\rvU_{F_2}\rvU_{F_3}\rvX^n}}+\V{\hat{p}_{\rvX^n\rvY^n},q_{\rvX^n\rvY^n}},\label{eq:bounds_vd}
\end{align}
where equality $(a)$ follows from the definition of $\hat{p}$ and $\tilde{p}$. We first establish that, as $n$ goes to infinity, the coding scheme operates at the sum rate in Proposition~\ref{prop:stonrg_coordination} and $\V{\hat{p}_{\rvX^n\rvY^n},q_{\rvX^n\rvY^n}}$ vanishes. \smallskip{}

\begin{lemma}
  \label{lm:1}
  The sequence of coding schemes satisfies
    \begin{align}
    &\lim_{n\rightarrow\infty}R_0+R = C(W_{\rvY\rvX|\rvV}),\\
\text{and}\quad     &\lim_{n\rightarrow\infty}\V{\hat{p}_{\rvX^n\rvY^n},q_{\rvX^n\rvY^n}}=0.\label{eq:total_resolvability}
  \end{align}
\end{lemma}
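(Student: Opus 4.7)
The plan is to handle the two claims separately, noting that both reduce quickly to Proposition~\ref{prop:resolvablity_polar} once we observe that the distribution $\hat p$ is precisely the output distribution of a channel-resolvability code for $W_{\rvY\rvX|\rvV}$ with uniform $\rvV$.

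For the sum-rate claim, first note that $F_2$ and $F_3$ are disjoint and $F_2\cup F_3=\calG_{\rvY\rvX|\rvV}$, so
\begin{align*}
R+R_0=\frac{\card{F_3}+\card{F_2}}{n}=\frac{\card{\calG_{\rvY\rvX|\rvV}}}{n}.
\end{align*}
Conditions \textbf{C2} and \textbf{C3} make $W_{\rvX|\rvV}$ and $W_{\rvY|\rvV}$ symmetric, and the Markov chain \textbf{C1} gives the factorization $W_{\rvY\rvX|\rvV}(y,x|v)=W_{\rvY|\rvV}(y|v)W_{\rvX|\rvV}(x|v)$; combining the respective output permutations $\pi_1$ coordinate-wise produces a permutation on $\calY\times\calX$ certifying that $W_{\rvY\rvX|\rvV}$ is a binary-input symmetric channel. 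The polarization result~\cite[Proposition 20]{Mahdavifar2011} then yields $\card{\calG_{\rvY\rvX|\rvV}}/n\to C(W_{\rvY\rvX|\rvV})$.

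For the total-variation claim, I would inspect the definition of $\hat p$ and recognize the exact construction of Section~\ref{sec:coding-scheme-chann}: the bits indexed by $F_2\cup F_3=\calG_{\rvY\rvX|\rvV}$ are drawn uniformly, the bits indexed by $F_1=\calB_{\rvY\rvX|\rvV}$ are frozen to zero, the polar transform $G_n$ is applied, and the resulting codeword is passed through the memoryless symmetric channel $W_{\rvY\rvX|\rvV}^{\otimes n}$. This is precisely the coset code $\calC_n$ of Section~\ref{sec:coding-scheme-chann} for the symmetric channel $W_{\rvY\rvX|\rvV}$ with uniform input $\calB(1/2)$. Proposition~\ref{prop:resolvablity_polar} therefore gives $\V{\hat p_{\rvX^n\rvY^n},q_{\rvX^n\rvY^n}}\to 0$, which is the desired conclusion.

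There is no real obstacle: the substantive analytical work has already been performed in Proposition~\ref{prop:resolvablity_polar}, and Lemma~\ref{lm:1} amounts to identifying $\hat p$ with a resolvability output distribution and invoking Mahdavifar--Vardy polarization for the rate. The only care needed is the verification that the product channel $W_{\rvY\rvX|\rvV}$ inherits symmetry from its two factors, which is a routine coordinate-wise argument using the permutations supplied by \textbf{C2} and \textbf{C3}.
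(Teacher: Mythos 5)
Your proposal is correct and follows essentially the same route as the paper: identify $\hat p$ with the output of the resolvability coset code of Section~\ref{sec:coding-scheme-chann} for $W_{\rvY\rvX|\rvV}$ (uniform bits on $F_2\cup F_3=\calG_{\rvY\rvX|\rvV}$, frozen bits on $F_1=\calB_{\rvY\rvX|\rvV}$) and invoke Proposition~\ref{prop:resolvablity_polar} for both the rate limit and the vanishing variational distance. Your explicit coordinate-wise check that $W_{\rvY\rvX|\rvV}$ inherits symmetry from \textbf{C1}--\textbf{C3} is a small welcome addition that the paper leaves implicit.
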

\begin{proof}
  By recalling that $F_2\cup F_3 \eqdef \calG_{\rvX\rvY|\rvV}$ and that the bits in positions $F_2$ and $F_3$ are i.i.d $\calB(1/2)$ random bits, Proposition~\ref{prop:resolvablity_polar} guarantees that the coding scheme is a resolvability code for $(W_{\rvY\rvX|\rvV},q_{\rvX})$ with a resolution rate satisfying
$\lim_{n\rightarrow\infty}\frac{1}{n}\abs{\calG_{\rvY\rvX|\rvV}} = C(W_{\rvY\rvX|\rvV})$.
\end{proof}
Next, we show that, as $n$ goes to infinity, the coding scheme achieves the communication rate $R$ of Proposition~\ref{prop:stonrg_coordination} and the \emph{average} over all possible choices of frozen bits $\mathbf{u}_{F_1}$ of $\V{\tilde{p}_{\rvU_{F_2}\rvU_{F_3}\rvX^n},\hat{p}_{\rvU_{F_2}\rvU_{F_3}\rvX^n}}$ vanishes, as well. \smallskip{}

\begin{lemma}
  \label{lm:2}
  The sequence of coding schemes satisfies
  \begin{align}
    \label{eq:1}
        &\lim_{n\rightarrow\infty}R = C(W_{\rvX|\rvV}),\\
    \text{and}\quad &\lim_{n\rightarrow\infty}\E[\rvU_{F_1}]{\V{\tilde{p}_{\rvU_{F_2}\rvU_{F_3}\rvX^n},\hat{p}_{\rvU_{F_2}\rvU_{F_3}\rvX^n}}}=0.
  \end{align}
\end{lemma}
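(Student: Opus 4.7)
The proof separates into two claims. The rate claim is immediate from the construction: Lemma~\ref{lm:degradation} yields $\calG_{\rvX|\rvV}\subseteq\calG_{\rvY\rvX|\rvV}$, hence $F_3=\calG_{\rvY\rvX|\rvV}\cap\calG_{\rvX|\rvV}=\calG_{\rvX|\rvV}$, and the polarization result of~\cite{Mahdavifar2011} applied to the symmetric channel $W_{\rvX|\rvV}$ (the same statement already invoked in Proposition~\ref{prop:resolvablity_polar}) gives $\lim_{n\to\infty}|F_3|/n=C(W_{\rvX|\rvV})$.

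For the variational-distance claim, my plan is to introduce a reference distribution $\bar p(\mathbf{u},\mathbf{x})\eqdef 2^{-n}W_{\rvX^n|\rvV^n}(\mathbf{x}|\mathbf{u}G_n)$, i.e.\ the joint law obtained by drawing $\mathbf{u}$ uniformly on $\{0,1\}^n$ and then $\mathbf{x}$ through the channel. Since $\mathbf{v}=\mathbf{u}G_n$ is also uniform, the $\mathbf{x}$-marginal of $\bar p$ is i.i.d.\ $q_\rvX\sim\calB(1/2)$, matching the $\mathbf{x}$-marginal of $\tilde p$. Extending $\tilde p$ by treating $\mathbf{u}_{F_1}$ as a uniform random input to the SC encoder, the conditional distribution of $\bar p$ given $\mathbf{u}_{F_1}$ coincides with $\hat p_{\mathbf{u}_{F_1}}$, and both $\tilde p$ and $\bar p$ share the same uniform $\mathbf{u}_{F_1}$-marginal. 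Consequently the expected variational distance over $\rvU_{F_1}$ collapses to the single joint variational distance $\V{\tilde p_{\rvU\rvX^n},\bar p_{\rvU\rvX^n}}$.

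I would then apply Pinsker in the form $\V{\bar p,\tilde p}\leq\sqrt{\avgD{\bar p}{\tilde p}/2}$ and expand the divergence by the chain rule in the bit ordering $1,2,\ldots,n$. Since the $\mathbf{x}$-marginals agree, only the per-bit conditional terms survive. A short computation from the definition of $\widetilde W_n^{(i)}$ identifies the SC rule~\eqref{SC_encoding_fixed} with the exact posterior of $u_i$ under $\bar p$, so the indices $i\in F_3$ contribute $0$. For $i\in F_1\cup F_2$, the conditional $\tilde p(u_i|\mathbf{x},\mathbf{u}_1^{i-1})$ equals $1/2$, which is precisely $\bar p(u_i)$, and the $i$-th term telescopes to $I_{\bar p}(U_i;\mathbf{X},\mathbf{U}_1^{i-1})=C(\widetilde W_n^{(i)})$. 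Invoking Lemma~\ref{lm:degradation} once more to identify $F_1\cup F_2=\calB_{\rvX|\rvV}$, the total bound becomes $\avgD{\bar p}{\tilde p}\leq|\calB_{\rvX|\rvV}|\cdot 2^{-n^\beta}\leq n\cdot 2^{-n^\beta}$, which tends to zero, and Pinsker finishes the argument.

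The main thing to get right is the direction of the KL divergence: the only bit-channel-capacity identity available is $\mathbb{E}_{\bar p}[\avgD{\bar p(u_i|\cdot)}{1/2}]=C(\widetilde W_n^{(i)})$, which surfaces naturally when expanding $\avgD{\bar p}{\tilde p}$ but not $\avgD{\tilde p}{\bar p}$; the symmetry of Pinsker's inequality in its two arguments is what lets this one-sided KL bound control the variational distance of interest. Everything else is bookkeeping resting on the degradation identity $F_1\cup F_2=\calB_{\rvX|\rvV}$, which encodes that bit channels already pure noise for the upgraded channel $W_{\rvY\rvX|\rvV}$ remain pure noise for the degraded channel $W_{\rvX|\rvV}$.
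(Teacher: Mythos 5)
Your proposal is correct, and while the first half (the rate claim and the reduction of the averaged variational distance to a single distance between the frozen-bit-averaged joint laws $\tilde P_{\rvU^n\rvX^n}$ and $\hat P_{\rvU^n\rvX^n}$, your $\bar p$) coincides with the paper's argument, the core bound is obtained by a genuinely different route. The paper never forms a global KL divergence: it bounds the total variation directly by a telescoping/triangle-inequality expansion as in the proof of \cite[Lemma 4]{Korada2010}, notes that the indices in $F_3$ drop out because the SC rule equals the posterior under $\hat P$, and then for each $i\in F_1\cup F_2$ invokes the symmetry of the bit channels $\widetilde W_n^{(i)}$ together with Lemma~\ref{lm:symmetric_channels} to identify $\avgD{\widetilde{W}_{\rvX^n\rvU_1^{i-1}|\rvU_i=u}}{\hat{P}_{\rvX^n\rvU_1^{i-1}}}=C(\widetilde W_n^{(i)})$, applying Pinsker separately at each index to get $B_n\leq n2^{-\frac{1}{2}n^\beta}\sqrt{2\ln 2}$. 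You instead apply Pinsker once at the level of the full joint and expand $\avgD{\bar p}{\tilde p}$ by the chain rule; the $F_3$ terms vanish for the same reason, and each $i\in F_1\cup F_2$ contributes exactly $I_{\bar p}(\rvU_i;\rvX^n\rvU_1^{i-1})$, which is at most $C(\widetilde W_n^{(i)})\leq 2^{-n^\beta}$ since $F_1\cup F_2=\calB_{\rvX|\rvV}$. This buys two small advantages: the bound $\sqrt{2\ln 2\cdot n2^{-n^\beta}}$ is marginally sharper, and the step needs only the universal inequality "mutual information under uniform input $\leq$ capacity," so the per-index appeal to bit-channel symmetry (Arikan's Proposition 13 plus Lemma~\ref{lm:symmetric_channels}) is not strictly required — note that the equality $\mathbb{E}_{\bar p}\bigl[\avgD{\bar p(u_i|\cdot)}{1/2}\bigr]=C(\widetilde W_n^{(i)})$ you quote does use that symmetry, but only the "$\leq$" direction is needed. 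Your observation about the choice of KL direction is apt, and your identification of the SC rule \eqref{SC_encoding_fixed} with the exact posterior of $u_i$ under $\bar p$ is precisely the property recorded in Remark~\ref{rk:3}.
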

\begin{proof}
  Since $F_3\eqdef \calG_{\rvX|\rvV}$ and since the bits in position $F_3$ are i.i.d. $\calB(1/2)$ random bits, Proposition~\ref{prop:resolvablity_polar} and Remark~\ref{rk:1} ensure that $\lim_{n\rightarrow\infty}\frac{1}{n}{\abs{\calG_{\rvX|\rvV}}} = C(W_{\rvX|\rvV})$.

  We now define two new distributions on $\calU^n\times\calX^n$ as follows.
  \begin{align}
    \label{eq:3}
\hat{P}(\mathbf{u},\mathbf{x})&\eqdef\frac{1}{2^{\card{F_1}}}\frac{1}{2^{\card{F_2}}}\frac{1}{2^{\card{F_3}}}W_{\rvX^n|\rvV^n}(\mathbf{x}|\mathbf{u}G_n),\\
\tilde{P}(\mathbf{u},\mathbf{x})&\eqdef q_{\rvX^n}(\mathbf{x})\prod_{i=1}^n\tilde{P}(u_i|\mathbf{x}\mathbf{u}_{1}^{i-1}),
  \end{align}
  where
  \begin{align*}
    \tilde{P}(u_i|\mathbf{x}\mathbf{u}_{1}^{i-1})\eqdef\left\{
      \begin{array}{l}
        \frac{1}{2}\text{ for }i\in F_1\cup F_2\\
        \tilde{p}(u_i|\mathbf{x}\mathbf{u}_{1}^{i-1})\text{ for }i\in F_3.
      \end{array}
\right.
  \end{align*}
  \begin{remark}
    An important property shown in~\cite{Korada2010} is that, for
    $i\in F_3$, we have
    \begin{align*}
      \tilde{P}(u_i|\mathbf{x}\mathbf{u}_{1}^{i-1}) \eqdef
      \tilde{p}(u_i|\mathbf{x}\mathbf{u}_{1}^{i-1})=\hat{P}(u_i|\mathbf{x}\mathbf{u}_{1}^{i-1}).
    \end{align*}\label{rk:3}
  \end{remark}
One can check that
\begin{align*}
  \E[\rvU_{F_1}]{\V{\tilde{p}_{\rvU_{F_2}\rvU_{F_3}\rvX^n},\hat{p}_{\rvU_{F_2}\rvU_{F_3}\rvX^n}}}=\V{\tilde{P}_{\rvU^n\rvX^n},\hat{P}_{\rvU^n\rvX^n}}.
\end{align*}
We now develop an upper bound for $\V{\tilde{P}_{\rvU^n\rvX^n},\hat{P}_{\rvU^n\rvX^n}}$. Note that
\begin{align}
  &\V{\tilde{P}_{\rvU^n\rvX^n},\hat{P}_{\rvU^n\rvX^n}}\nonumber{}\\
    &=
    \sum_{\mathbf{u},\mathbf{x}}\abs{q(\mathbf{x})\prod_{i=1}^n
      \tilde{P}(u_i|\mathbf{x},\mathbf{u}_1^{i-1})
      -\hat{P}(\mathbf{x})\prod_{i=1}^n
      \hat{P}(u_i|\mathbf{x},\mathbf{u}_1^{i-1})}\nonumber{}\\
    &\leq \underbrace{\sum_{\mathbf{u},\mathbf{x}}\abs{q(\mathbf{x})-\hat{P}(\mathbf{x})}\prod_{i=1}^n
      \tilde{P}(u_i|\mathbf{x},\mathbf{u}_1^{i-1})}_{\eqdef A_n}\nonumber{}\\
      &\phantom{--}+
      \underbrace{\sum_{\mathbf{u},\mathbf{x}}\hat{P}(\mathbf{x})\abs{\prod_{i=1}^n \tilde{P}(u_i|\mathbf{x},\mathbf{u}_1^{i-1})-\prod_{i=1}^n\hat{P}(u_i|\mathbf{x},\mathbf{u}_1^{i-1})}}_{\eqdef B_n}\label{eq:bound_avg_vd}
    \end{align}
    Since $\hat{P}_{\rvU^n}$ is uniform on $\calU^n$ and since $G_n$ defines a bijective map from $\calU^n$ to $\calX^n$, the distribution $\hat{P}_{\rvX^n}$ is also uniform and the term $A_n$ on the right-hand side of Eq.~\eqref{eq:bound_avg_vd} is zero. By applying a telescoping equality to the term $B_n$ as in the proof of \cite[Lemma 4]{Korada2010}, and by recalling that
$\forall i \in F_3$, $\hat{P}(u_i|\mathbf{x},\mathbf{u}_1^{i-1})=\tilde{P}(u_i|\mathbf{x},\mathbf{u}_1^{i-1})$, and that $\forall i \in F_1 \cup F_2$, $\hat{P}(u_i|\mathbf{x},\mathbf{u}_1^{i-1})=\frac{1}{2}$,
we obtain
\begin{align}
   B_n &\leq \sum_{i\in F_1\cup
      F_2}\sum_{u_i,\mathbf{u}_1^{i-1},\mathbf{x}}\hat{P}(\mathbf{x})\hat{P}(\mathbf{u}_1^{i-1}|\mathbf{x})\abs{\frac{1}{2}-\hat{P}(u_i|\mathbf{u}_1^{i-1}\mathbf{x})}\nonumber\\
    &=\sum_{i\in F_1\cup F_2}\sum_{u_i,\mathbf{u}_1^{i-1},\mathbf{x}}\abs{\frac{1}{2}\hat{P}(\mathbf{u}_1^{i-1},\mathbf{x})-\hat{P}(u_i,\mathbf{u}_1^{i-1}\mathbf{x})}\nonumber\\
    &=\frac{1}{2}\sum_{i\in F_1\cup F_2}\sum_{u_i,\mathbf{u}_1^{i-1},\mathbf{x}}\abs{\hat{P}(\mathbf{u}_1^{i-1},\mathbf{x})-\widetilde{W}_{n}^{(i)}(\mathbf{x},\mathbf{u}_1^{i-1}|u_i)}\nonumber\\
    &\eqdef\frac{1}{2}\sum_{i\in F_1\cup F_2}\sum_{u_i}\V{\hat{P}_{\rvX^n\rvU_1^{i-1}},\widetilde{W}_{\rvX^n\rvU_1^{i-1}|\rvU_i=u_i}}\label{eq:bound_avg_vd_2}
  \end{align}
  By noting that
\begin{align*}
\hat{P}(\mathbf{u}_1^{i-1},\mathbf{x}) = \frac{1}{2}\sum_{u\in\{0,1\}}\widetilde{W}_{n}^{(i)}(\mathbf{x},\mathbf{u}_1^{i-1}|u_i=u),
\end{align*}
and since the bit-channels $\widetilde{W}_n^{(i)}$ are symmetric~\cite[Proposition 13]{Arikan2009}, we can argue as in the proof of Proposition~\ref{prop:resolvablity_polar} that for any $u\in\{0,1\}$ and $i\in F_1\cup F_2$,
\begin{align*}
  \avgD{\widetilde{W}_{\rvX^n\rvU_1^{i-1}|\rvU_i=u}}{\hat{P}_{\rvX^n\rvU_1^{i-1}}}=C(\widetilde{W}_n^{(i)})\leq 2^{-n^{\beta}}.
\end{align*}
Using Pinsker's inequality, we obtain
\begin{align*}
  \V{\hat{P}_{\rvX^n\rvU_1^{i-1}},\widetilde{W}_{\rvX^n\rvU_1^{i-1}|\rvU_i=u}}\leq 2^{-\frac{1}{2}n^{\beta}} \sqrt{2\ln 2},
\end{align*}
and we conclude that $B_n\leq n2^{-\frac{1}{2}n^{\beta}} \sqrt{2\ln 2}$. Therefore,
\begin{align*}
  \lim_{n\rightarrow\infty}   \V{\tilde{P}_{\rvU^n\rvX^n},\hat{P}_{\rvU^n\rvX^n}}\leq \lim_{n\rightarrow\infty}B_n = 0.
\end{align*}
\end{proof}
Finally we show that $\V{\tilde{p}_{\rvU_{F_2}\rvU_{F_3}\rvX^n\rvY^n},\hat{p}_{\rvU_{F_2}\rvU_{F_3}\rvX^n\rvY^n}}$ is independent of the value of the frozen bits $\mathbf{u}_{F_1}$.\smallskip{}

\begin{lemma}
  \label{lm:3}
  \begin{multline}
    \label{eq:2}
    \E[\rvU_{F_1}]{\V{\tilde{p}_{\rvU_{F_2}\rvU_{F_3}\rvX^n\rvY^n},\hat{p}_{\rvU_{F_2}\rvU_{F_3}\rvX^n\rvY^n}}}\\
    = \V{\tilde{p}_{\rvU_{F_2}\rvU_{F_3}\rvX^n\rvY^n},\hat{p}_{\rvU_{F_2}\rvU_{F_3}\rvX^n\rvY^n}}.
  \end{multline}
\end{lemma}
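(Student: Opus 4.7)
The statement asserts that the variational distance on the right-hand side does not depend on the value of the frozen bits $\mathbf{u}_{F_1}$, so its expectation over a uniform $\rvU_{F_1}$ reduces to that deterministic value. My plan is to exhibit, for every $\mathbf{u}'_{F_1}\in\{0,1\}^{\abs{F_1}}$, a single bijection $T=T_{\mathbf{u}'_{F_1}}$ of $\{0,1\}^{\abs{F_2}}\times\{0,1\}^{\abs{F_3}}\times\calX^n\times\calY^n$ such that $\tilde p^{(\mathbf{u}'_{F_1})}=\tilde p^{(\mathbf{0}_{F_1})}\circ T$ and $\hat p^{(\mathbf{u}'_{F_1})}=\hat p^{(\mathbf{0}_{F_1})}\circ T$ hold simultaneously. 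Since the total variational distance is invariant under bijective relabelings of the ground set, this immediately forces $\V{\tilde p^{(\mathbf{u}'_{F_1})},\hat p^{(\mathbf{u}'_{F_1})}}=\V{\tilde p^{(\mathbf{0}_{F_1})},\hat p^{(\mathbf{0}_{F_1})}}$. The candidate bijection is suggested by $\hat p$: letting $\mathbf{d}\in\{0,1\}^n$ be equal to $\mathbf{u}'_{F_1}$ on $F_1$ and zero on $F_2\cup F_3$ and setting $\mathbf{c}\eqdef\mathbf{d}G_n$, I take $T:(\mathbf{u}_{F_2},\mathbf{u}_{F_3},\mathbf{x},\mathbf{y})\mapsto(\mathbf{u}_{F_2},\mathbf{u}_{F_3},\mathbf{x}\oplus\mathbf{c},\pi_1^{\mathbf{c}}(\mathbf{y}))$, where $\pi_1$ is the symmetry permutation of $W_{\rvY|\rvV}$ from~\eqref{symmetry} and $\pi_1^{\mathbf{c}}$ denotes its componentwise application to the coordinates of $\mathbf{y}$ selected by $\mathbf{c}$. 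For $\hat p$ the required identity then follows by substituting $\mathbf{u}'G_n=\mathbf{u}G_n\oplus\mathbf{c}$ in the definition of $\hat p^{(\mathbf{u}'_{F_1})}$ and invoking the BSC symmetry of $W_{\rvX|\rvV}$ (condition \textbf{C2}) and~\eqref{symmetry} for $W_{\rvY|\rvV}$ (condition \textbf{C3}).

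The main obstacle is the corresponding identity for $\tilde p$, because the SC encoding probability $\tilde p(u_i\,|\,\mathbf{x},(\mathbf{u}')_1^{i-1})$ at a position $i\in F_3$ sees only the prefix $(\mathbf{u}')_1^{i-1}$, which differs from $\mathbf{u}_1^{i-1}$ only at the indices of $F_1\cap\intseq{1}{i-1}$, whereas the target shift $\mathbf{c}=\mathbf{d}G_n$ depends on \emph{all} of $\mathbf{u}'_{F_1}$, including possibly coordinates beyond $i$. To bridge this gap I would apply the BSC symmetry of $W_{\rvX|\rvV}$ twice inside the bit-channel definition $\widetilde W_n^{(i)}(\mathbf{x},\mathbf{u}_1^{i-1}\,|\,u_i)=2^{-(n-1)}\sum_{\mathbf{u}_{i+1}^n}W_{\rvX^n|\rvV^n}(\mathbf{x}\,|\,\mathbf{u}G_n)$: first, the prefix shift $\mathbf{d}_1^{i-1}$ on $\mathbf{u}_1^{i-1}$ is absorbed into a shift of $\mathbf{x}$ by $(\mathbf{d}_1^{i-1},\mathbf{0})G_n$; and second, the change of summation variable $\mathbf{u}_{i+1}^n\mapsto\mathbf{u}_{i+1}^n\oplus\mathbf{d}_{i+1}^n$ leaves the sum invariant yet, via the same symmetry, contributes an additional shift of $\mathbf{x}$ by $(\mathbf{0},\mathbf{d}_{i+1}^n)G_n$. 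Because $d_i=0$ by the disjointness of $F_1$ and $F_3$, these two partial shifts combine to exactly $\mathbf{d}G_n=\mathbf{c}$, yielding $\tilde p(u_i\,|\,\mathbf{x},(\mathbf{u}')_1^{i-1})=\tilde p(u_i\,|\,\mathbf{x}\oplus\mathbf{c},\mathbf{u}_1^{i-1})$ for every $i\in F_3$. Combined with $q_{\rvX^n}(\mathbf{x})=q_{\rvX^n}(\mathbf{x}\oplus\mathbf{c})$ (from $q_\rvX\sim\calB(1/2)$) and the identity $W_{\rvY^n|\rvV^n}(\mathbf{y}\,|\,\mathbf{u}'G_n)=W_{\rvY^n|\rvV^n}(\pi_1^{\mathbf{c}}(\mathbf{y})\,|\,\mathbf{u}G_n)$ already used for $\hat p$, this gives $\tilde p^{(\mathbf{u}'_{F_1})}=\tilde p^{(\mathbf{0}_{F_1})}\circ T$ and closes the argument.
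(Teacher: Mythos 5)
Your proposal is correct, and at its core it is the same symmetry/change-of-variables argument as the paper, but the execution differs in two ways worth noting. First, where the paper invokes the gauge transformation of likelihood ratios (Lemma 8 of the Korada--Urbanke source-coding paper) and an inductive argument patterned on their Lemma 9 to relate the SC encoders for two arbitrary frozen-bit values via a general shift $\mathbf{w}$ with $(\mathbf{w}G_n^{-1})_{F_1}=\db{\mathbf{u}}_{F_1}\oplus\breve{\mathbf{u}}_{F_1}$, you fix the specific shift $\mathbf{d}$ supported on $F_1$ (i.e.\ $\mathbf{w}=\mathbf{d}G_n$), so that the $u$-coordinates in $F_2\cup F_3$ are untouched, no likelihood-ratio inversion ever occurs (since $d_i=0$ on $F_3$), and the covariance identity $\tilde p(u_i|\mathbf{x},(\mathbf{u}')_1^{i-1})=\tilde p(u_i|\mathbf{x}\oplus\mathbf{c},\mathbf{u}_1^{i-1})$ follows directly from the bit-channel definition by the change of summation variable $\mathbf{u}_{i+1}^n\mapsto\mathbf{u}_{i+1}^n\oplus\mathbf{d}_{i+1}^n$ together with the memoryless BSC identity $W_{\rvX^n|\rvV^n}(\mathbf{x}|\mathbf{v}\oplus\mathbf{c})=W_{\rvX^n|\rvV^n}(\mathbf{x}\oplus\mathbf{c}|\mathbf{v})$; this is more elementary and self-contained, at the price of exploiting \textbf{C2} more heavily (the paper's route would survive with a general symmetric $W_{\rvX|\rvV}$ acting through $\mathbf{w}\cdot\mathbf{x}$). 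Second, you transport the $\rvY^n$ coordinate explicitly through the bijection via $\pi_1^{\mathbf{c}}$ using \textbf{C3}, whereas the paper's written proof only establishes invariance for the $(\rvU^n,\rvX^n)$ marginals and leaves implicit that this suffices because both $\tilde p$ and $\hat p$ append the same kernel $W_{\rvY^n|\rvV^n}(\cdot|\mathbf{u}G_n)$; your version actually matches the statement of the lemma more literally. Both arguments close the same way: the map $T$ is a bijection (XOR plus an involutive permutation applied componentwise) carrying $(\tilde p^{(\mathbf{u}'_{F_1})},\hat p^{(\mathbf{u}'_{F_1})})$ to $(\tilde p^{(\mathbf{0}_{F_1})},\hat p^{(\mathbf{0}_{F_1})})$, uniformity of $q_{\rvX^n}$ gives $q(\mathbf{x})=q(\mathbf{x}\oplus\mathbf{c})$, and invariance of $\V{\cdot,\cdot}$ under a common relabeling makes the distance constant in $\mathbf{u}_{F_1}$, hence equal to its expectation.
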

\begin{proof}
Since the channel $W_{\rvX|\rvV}$ is symmetric, there exists a permutation $\pi_1:\calX\rightarrow \calX$ such that $\pi_1=\pi_1^{-1}$ and 
\begin{equation} \label{symmetry}
\forall x \in \mathcal{X}, \quad W_{\rvX|\rvV}(x|0)=W_{\rvX|\rvV}(\pi_1(x)|1).
\end{equation}  
Defining the identity $\pi_0: \mathcal{X} \to \mathcal{X}$ allows us to define an action $\{0,1\}\times\mathcal{X}\to \mathcal{X}$ given by
$$v \cdot x  = \pi_{v}(x).$$
This can be extended component-wise to an action $\{0,1\}^n\times \mathcal{X}^n \to \mathcal{X}^n$ as
$$(v_1,\ldots,v_n) \cdot (x_1,\ldots,x_n)   = (\pi_{v_1}(x_1),\ldots,\pi_{v_n}(x_n)).$$
Therefore, we have $\forall \mathbf{v},\mathbf{w} \in \{0,1\}^n$, $\forall \mathbf{x} \in \mathcal{X}^n$,
\begin{equation} \label{action}
W_{\rvX^n|\rvV^n}(\mathbf{x}|\mathbf{v})=W_{\rvX^n|\rvV^n}(\mathbf{w} \cdot \mathbf{x} |\mathbf{v} \oplus \mathbf{w}) 
\end{equation}
Lemma 8 in~\cite{Korada2010} shows that  $\forall i\in\intseq{1}{n}$
\begin{multline} \label{gauge_transformation}
L_n^{(i)}(\mathbf{w} \cdot \mathbf{x},(\mathbf{w}G_n^{-1})_1^{i-1} \oplus \mathbf{u}_1^{i-1})\\
=\begin{cases}
L_n^{(i)}(\mathbf{x},\mathbf{u}_1^{i-1}) & \text { if }  (\mathbf{w}G_n^{-1})_i=0\\
(L_n^{(i)}(\mathbf{x},\mathbf{u}_1^{i-1}))^{-1} & \text { if }  (\mathbf{w}G_n^{-1})_i=1
\end{cases}
\end{multline}
Now, consider the two SC encodings corresponding to two values of the frozen bits $\breve{\mathbf{u}}_{F_1}$ and $\db{\mathbf{u}}_{F_1}$.
\begin{align} \label{SC_encoder_1}
&\breve{p}(u_i|\mathbf{x},\mathbf{u}_1^{i-1})=\begin{cases} 
1 & i \in F_1, u_i=(\breve{\mathbf{u}}_{F_1})_i \\
0 & i \in F_1, u_i \neq (\breve{\mathbf{u}}_{F_1})_i \\
\frac{1}{2} & i \in F_2 \\
\frac{L_n^{(i)}(\mathbf{x},\mathbf{u}_1^{i-1})}{1+L_n^{(i)}(\mathbf{x},\mathbf{u}_1^{i-1})} & i \in F_3, u_i=0\\
\frac{1}{1+L_n^{(i)}(\mathbf{x},\mathbf{u}_1^{i-1})} & i \in F_3, u_i=1
\end{cases}
\\
\label{SC_encoder_2}
&\db{p}(u_i|\mathbf{x},\mathbf{u}_1^{i-1})=\begin{cases} 
1 & i \in F_1, u_i=(\db{\mathbf{u}}_{F_1})_i \\
0 & i \in F_1, u_i \neq (\db{\mathbf{u}}_{F_1})_i \\
\frac{1}{2} & i \in F_2 \\
\frac{L_n^{(i)}(\mathbf{x},\mathbf{u}_1^{i-1})}{1+L_n^{(i)}(\mathbf{x},\mathbf{u}_1^{i-1})} & i \in F_3, u_i=0\\
\frac{1)}{1+L_n^{(i)}(\mathbf{x},\mathbf{u}_1^{i-1})} & i \in F_3, u_i=1
\end{cases}
\end{align}
Using (\ref{gauge_transformation}) and following the proof of Lemma 9 in~\cite{Korada2010}, one can show by induction that if $\mathbf{w} \in \{0,1\}^n$ is such that 
\begin{equation} \label{compatibility_condition}
\db{\mathbf{u}}_{F_1} \oplus \breve{\mathbf{u}}_{F_1}=(\mathbf{w}G_n^{-1})_{F_1},
\end{equation}
 then $\forall i\in\intseq{1}{n}$, $\forall u_i \in \{0,1\}$, we have
\begin{multline} \label{Lemma313}
\breve{p}(u_i |\mathbf{x},\mathbf{u}_1^{i-1})\\
=\db{p}(u_i \oplus (\mathbf{w}G_n^{-1})_i |\mathbf{w} \cdot \mathbf{x}, (\mathbf{w}G_n^{-1})_1^{i-1} \oplus\mathbf{u}_1^{i-1}).
\end{multline}
Note that, given $\breve{\mathbf{u}}_{F_1}$ and $\db{\mathbf{u}}_{F_1}$, a $\mathbf{w}$ satisfying (\ref{compatibility_condition}) always exists since $G_n$ is one-to-one.

Similarly to Lemma 10 in~\cite{Korada2010}, where it is shown that the average distortion is independent of the choice of frozen bits, we prove that the variational distance is independent of the choice of the frozen bits $\mathbf{u}_{F_1}$. Consider two resolvability codes for the channel $W_{\rvX|\rvV}$ obtained by transmitting i.i.d. $\mathcal{B}(1/2)$ random bits on $F_2$ and $F_3$ and by freezing the bits in $F_1$ to $\db{\mathbf{u}}_{F_1}$ and $\breve{\mathbf{u}}_{F_1}$, respectively. Denote the induced distribution by $\hat{p}_{{\rvX}^n{\rvU}^n|\rvU_{F_1}=\db{\mathbf{u}}_{F_1}}$ and $\hat{p}_{{\rvX}^n{\rvU}^n|\rvU_{F_1}=\breve{\mathbf{u}}_{F_1}}$, respectively. Our goal is to show that
\begin{align*}
\V{\db{p}_{{\rvX}^n{\rvU}^n},\hat{p}_{{\rvX}^n{\rvU}^n|\rvU_{F_1}=\db{\mathbf{u}}_{F_1}}}=\V{\breve{p}_{{\rvX}^n{\rvU}^n},\hat{p}_{{\rvX}^n{\rvU}^n|\rvU_{F_1}=\breve{\mathbf{u}}_{F_1}}}.
\end{align*}
In fact, we have
\begin{multline}
\V{\db{p}_{{\rvX}^n{\rvU}^n},\hat{p}_{{\rvX}^n{\rvU}^n|\rvU_{F_1}=\db{\mathbf{u}}_{F_1}}}\\
\begin{split}
  &=\sum_{\mathbf{x},\mathbf{u}} \frac{1}{2^{\abs{F_2}}}  \mathds{1}_{\left\{\mathbf{u}_{F_1}=\db{\mathbf{u}}_{F_1}\right\}}  \left\vert q(\mathbf{x}) \prod_{i \in F_3}
    \db{p}(u_i|\mathbf{x},\mathbf{u}_1^{i-1})\right.\\
  &\phantom{-------------}\left.-\frac{1}{2^{\abs{F_3}}}W_{\rvX^n|\mathsf{V}^n}(\mathbf{x}|\mathbf{u}G_n)\right\vert.
\end{split}\label{eq:vd_u1}
\end{multline}
Consider the change of variables $\mathbf{u}=\mathbf{v} \oplus \mathbf{w}G_n^{-1}$ and $\mathbf{x} = \mathbf{w} \cdot \mathbf{z}$, where $\mathbf{w}$ satisfies (\ref{compatibility_condition}). Equation~\eqref{eq:vd_u1} becomes
\begin{multline*}
\sum_{\mathbf{z},\mathbf{v}} \frac{1}{2^{\abs{F_2}}} \mathds{1}_{\left\{\mathbf{v}_{F_1}=\breve{\mathbf{u}}_{F_1}\right\}} \\
\left| q(\mathbf{w} \cdot \mathbf{z}) \prod_{i \in F_3} \db{p}(v_i \oplus (\mathbf{w}G_n^{-1})_i|\mathbf{w} \cdot \mathbf{z},\mathbf{v}_1^{i-1} \oplus (\mathbf{w}G_n^{-1})_1^{i-1})
 \right.\\
\left.-\frac{1}{2^{\abs{F_3}}}W_{\rvX^n|\rvV^n}(\mathbf{w}\cdot\mathbf{z}|\mathbf{v}G_n \oplus \mathbf{w})\right|
\end{multline*}
Using Eq.~(\ref{Lemma313}) and Eq.~(\ref{action}), this further simplifies as 
\begin{align*}
&\sum_{\mathbf{z},\mathbf{v}} \frac{1}{2^{\abs{F_2}}} \mathds{1}_{\left\{\mathbf{v}_{F_1}=\breve{\mathbf{u}}_{F_1}\right\}}
\left| q(\mathbf{w} \cdot \mathbf{z}) \prod_{i \in F_3} \breve{p}(v_i |\mathbf{z},\mathbf{v}_1^{i-1}) \right.\\
&\phantom{-------------}\left.-\frac{1}{2^{\abs{F_3}}}W_{\rvX^n|\rvV^n}(\mathbf{z}|\mathbf{v}G_n)\right|\\
&=\V{\breve{p}_{{\rvX}^n{\rvU}^n},\hat{p}_{{\rvX}^n{\rvU}^n|\rvU_{F_1}=\breve{\mathbf{u}}_{F_1}}},
\end{align*} 
where the last inequality follows because $q_{{\rvX}^n}$ is the uniform distribution on $\calX^n$.
\end{proof}
Combining the results of Lemma~\ref{lm:1}, Lemma~\ref{lm:2}, and Lemma~\ref{lm:3} with Eq.~\eqref{eq:bounds_vd}, we conclude that the proposed coding scheme is a resolvability code.

\section{Discussion}
\label{sec:discussion}
In general, the achievable coordination region with polar codes given in Proposition~\ref{prop:stonrg_coordination} is strictly smaller than the coordination capacity region given in Theorem~\ref{th:coordination_capacity} because of the constraints  \textbf{C1}, \textbf{C2}, and \textbf{C3}, on the random variable $\rvV$ (see Section~\ref{sec:coding-scheme}). 

As a first illustration, consider the situation in which $\calX=\{0,1\}$, $\calY=\{0,?,1\}$, $q_{\rvX}\sim\calB(\frac{1}{2})$ and $q_{\rvY|\rvX}$ corresponds to the concatenation of a binary symmetric channel with cross-over probability $p$ with a binary erasure channel with erasure probability $\epsilon$. In other words, the transition probability matrix corresponding to $q_{\rvY|\rvX}$ is
\begin{align*}
  \left(
  \begin{array}{ccc}
    (1-p)(1-\epsilon)&\epsilon&p(1-\epsilon)\\
    p(1-\epsilon)&\epsilon&(1-p)(1-\epsilon)
  \end{array}
\right)
\end{align*}
Because of condition \textbf{C2} (see Section~\ref{sec:coding-scheme}), one can show that the boundary of the region of achievable rates $(R,R_0)$ is characterized by
\begin{align*}
  R &\geq 1-\Hb{q}\\
R_0+R &\geq (1-\epsilon)\left(\Hb{p}-\Hb{\frac{p-q}{1-2q}}\right) +1-\Hb{q}
\end{align*}
for $q\in[0,\min(\frac{1}{2},p)]$. On the other hand, it is not difficult to show that the following rates are also admissible by Theorem~\ref{th:coordination_capacity}.
\begin{align*}
  R&\geq (1-\nu)(1-\Hb{p})\\
  R_0 +R &\geq \Hb{\epsilon}+(1-\epsilon)\Hb{p} +(1-\nu)\Hb{\frac{\epsilon-\nu}{1-\nu}}\\
  &\phantom{---------}+(1-\nu)(1-\Hb{p})
\end{align*}
for $\nu\in[0,\min(1,\epsilon)]$. The regions achievable with and without polar codes are illustrated in Figure~\ref{fig:example}, for the case $\epsilon=0.4$ and $p=0.15$.

\begin{figure}[t]
  \centering
  \includegraphics[width=0.9\linewidth]{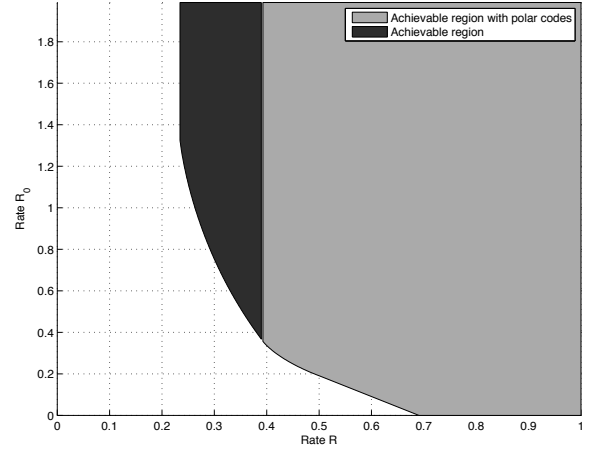}
  \caption{Example of achievable rates for coordination with and without polar codes.}
  \label{fig:example}
\end{figure}

As a second illustration, consider the situation in which $\calX=\{0,1\}$, $\calY=\{0,?,1\}$, and $q_{\rvY|\rvX}$ corresponds to binary erasure channel with erasure probability $\epsilon$. The coordination capacity region for this model is characterized in~\cite{Cuff2008}, and it is shown that the optimal choice of $\rvV$ such that $\rvX\rightarrow\rvV\rightarrow\rvY$ forms a Markov chain is a ternary random variable. In contrast, one can check that the only possible choice of such a $V$ satisfying the constraints \textbf{C1}, \textbf{C2}, and \textbf{C3} is $\rvV=\rvX$. Consequently, the achievable coordination rate with polar codes is the trivial region $\{(R,R_0):R_0\geq 0, R\geq 1$, which is achievable without any coding. The regions are illustrated in Figure~\ref{fig:second_example}.

\begin{figure}[t]
  \centering
  \includegraphics[width=0.9\linewidth]{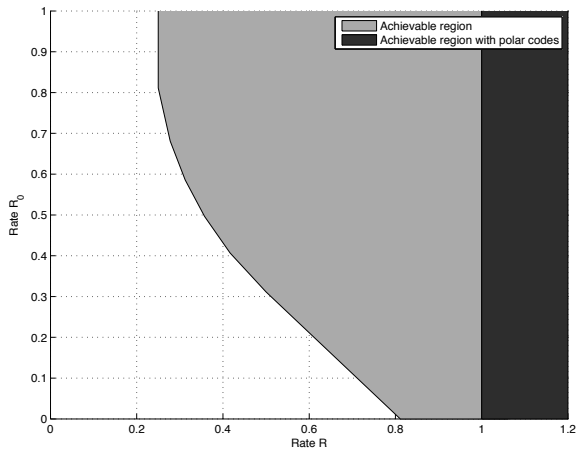}
  \caption{Example of achievable rates for coordination with and without polar codes.}
  \label{fig:second_example}
\end{figure}

The generalization of the results beyond binary actions at node $\rvX$ can be carried out by leveraging known results about non-binary polar codes. However, the generalization to non-uniform actions and asymmetric channels seems much more challenging, since the proofs used in this paper heavily rely on the symmetry properties and uniformity of the actions to coordinate. Finding an explicit coordination scheme in a more general case remains an open problem and will be the topic of future research. 

\section*{Acknowledgement}
\label{sec:acknowledgement}
The research of M. Bloch was supported in part by a PEPS grant from the Centre National de la Recherche Scientifique. The research of L. Luzzi was supported by the European Union Seventh Framework Program under grant agreement PIEF-GA-2010-274765. The research of J. Kliewer was supported by the U.S.~National Science Foundation under grants CCF-0830666 and CCF-1017632.
\newpage
\IEEEtriggeratref{0}
\bibliographystyle{IEEEtran}
\bibliography{itw2012}

\end{document}